\newtheorem{theorem}{Theorem}
\newtheorem{assumption}{Assumption}
\newtheorem{corollary}[theorem]{Corollary}
\newtheorem{proposition}{Proposition}
\newcommand{\figref}[1]{{Fig.}~\ref{#1}}
\def\bb0{{\mathbb{0}}}
\def\ba{{\mathbf{a}}}
\def\bb{{\mathbf{b}}}
\def\bff{{\mathbf{f}}}
\def\bg{{\mathbf{g}}}
\def\bh{{\mathbf{h}}}
\def\bm{{\mathbf{m}}}
\def\bn{{\mathbf{n}}}
\def\bp{{\mathbf{p}}}
\def\bt{{\mathbf{t}}}
\def\bx{{\mathbf{x}}}
\def\by{{\mathbf{y}}}
\def\b0{{\mathbf{0}}}
\def\bA{{\mathbf{A}}}
\def\bI{{\mathbf{I}}}
\def\bR{{\mathbf{R}}}
\def\bbE{{\mathbb{E}}}
\def\cA{\mathcal{A}}
\def\cN{\mathcal{N}}
\def\sf0{{\mathsf{0}}}
\newcommand{\sref}[1]{{Section}~\ref{#1}}
\DeclareMathOperator*{\argmax}{arg\,max}
\def\rm{\mathrm}
\begin{document}
\title{Deep Learning for mmWave Beam and Blockage Prediction Using Sub-6GHz Channels} 
\author{Muhammad Alrabeiah and Ahmed Alkhateeb \thanks{Muhammad Alrabeiah and Ahmed Alkhateeb are with Arizona State University (Email: malrabei, aalkhateeb@asu.edu).}}
\maketitle

\begin{abstract}
	Predicting the millimeter wave (mmWave) beams and blockages using sub-6GHz channels has the potential of enabling mobility and reliability in scalable mmWave systems. These gains attracted increasing interest in the last few years. Prior work, however, has focused on extracting spatial channel characteristics  at the sub-6GHz band first and then use them to reduce the mmWave beam training overhead. This approach has a number of limitations: (i) It still requires a beam search at mmWave, (ii) its performance is sensitive to the error associated with extracting the sub-6GHz channel characteristics, and (iii) it does not normally account for the different dielectric properties at the different bands. In this paper, we first prove that under certain conditions, there exist mapping functions that can predict the optimal mmWave beam and correct blockage status directly from the sub-6GHz channel, which overcome the limitations in prior work. These mapping functions, however, are hard to characterize analytically which motivates exploiting deep neural network models to learn them. For that, we prove that a large enough neural network can use the sub-6GHz channel to directly predict the optimal mmWave beam and correct blockage status with success probabilities that can be made arbitrarily close to one. Then, we develop an efficient deep learning model and evaluate its  beam/blockage prediction performance using the publicly available DeepMIMO dataset. The results show that the proposed solution can predict the mmWave blockages with more than 90$\%$ success probability.  Further, these results confirm the capability of the proposed deep learning model in predicting the optimal mmWave beams and approaching the optimal data rates, that assume perfect channel knowledge, while requiring no beam training overhead. This highlights the promising gains of leveraging deep learning models to predict  mmWave beams and blockages using sub-6GHz channels.   
\end{abstract}

\section{Introduction} \label{intro}
Enabling the high data rate gains of millimeter wave (mmWave) communications is very challenging in systems with high mobility and strict reliability constraints. This is mainly due to (i) the large training overhead associated with adjusting the beamforming vectors of the large mmWave arrays and (ii) the high sensitivity of mmWave signal propagation to blockages \cite{HeathJr2016,Andrews2016,Alkhateeb2018}. These challenges are expected to just get harder as wireless communication systems continue to move higher in frequency and deploy larger antenna arrays \cite{Rappaport2019,Sanguinetti2019}. Future mmWave communication systems, however, are expected to operate in multiple bands (including sub-6GHz and mmWave bands) \cite{Parkvall2017,Gonzalez-Prelcic2017}. Different than mmWave, sub-6GHz channels can generally be acquired with low training overhead---ideally with a single uplink pilot in time-division duplexing systems. Further, the propagation of the sub-6GHz signals is more robust to blockages. Then, can we leverage the spatial correlation between the sub-6GHz and mmWave channels to help reduce the high mmWave beam training overhead and maintain reliable links with blockages? More specifically, \textbf{can we use the sub-6GHz channels to directly predict the mmWave beams and blockages? Answering this interesting question is the goal of this paper.} 

\subsection{Related Work}

The general idea of using some knowledge about the sub-6GHz channels to aid the system and network operation at mmWave is motivated by the spatial correlation between the two bands, which has been verified through experimental measurements \cite{Hashemi2018, Peter2016, Nitsche2015}. On the network perspective, \cite{Hashemi2018} proposed a network architecture the leveraged the spatial correlation between sub-6GHz and mmWave bands for traffic scheduling and training overhead reduction. In \cite{Polese2017}, a dual connectivity protocol was developed that relies on a local coordinator to hand over the users between the two bands to avoid link failures. Leveraging deep learning tools, \cite{Burghal2019, Mismar2019} proposed strategies that learns the correlation between the sub-6GH and mmWave bands and exploits that for selecting the communication band or handing over the users from one band to the other. While the work in \cite{Hashemi2018, Semiari2017, Polese2017, Burghal2019, Mismar2019} is relevant, it does not target predicting the mmWave beams or blockages using sub-6GHz channels, which is the goal of this paper. 

To reduce the mmWave beam training overhead, \cite{Nitsche2015} designed a novel algorithmic framework to leverage the sub-6GHz spatial information in estimating the candidate mmWave beam directions. The feasibility of this solution was also studied in \cite{Nitsche2015} using a proof-of-concept prototype. This solution, however, was mainly limited to detecting the line-of-sight (LOS) mmWave direction. In \cite{Ali2018}, the spatial information from sub-6GHz was used to guide the compressive sensing based beam selection at mmWave bands and reduce the beam search overhead. With the same goal,  \cite{Ali2019} proposed an approach that constructs the mmWave channel covariance using the spatial characteristics extracted from the sub-6GHz band. This mmWave covariance knowledge can then be exploited to reduce the training overhead associated with the design of the analog or hybrid analog/digital precoding matrices.

While the interesting solutions in \cite{Nitsche2015,Ali2018,Ali2019} have the potential of reducing the search space of the mmWave beams, they share the following common limitations. First, the solutions in \cite{Nitsche2015,Ali2018,Ali2019} generally rely on the approach of estimating some spatial parameters, such as the angular characteristics and path gains, at the sub-6GHz band and then leverage them at mmWave. This makes their performance very sensitive to the parameters estimation error at the low-frequency bands. Also, this approach does not incorporate how the materials' dielectric coefficients, such as the reflection/scattering coefficients, differ in the two bands, which could be critical for the accurate modeling of the mmWave signal propagation. Further, the solutions in \cite{Ali2018,Ali2019} still requires relatively large beam training overhead at the mmWave band, which scales with the number of antennas. Finally, and to the best of our knowledge, no prior work has provided any theoretical guarantees on using the sub-6GHz channels to directly find the \textit{optimal} mmWave beams or detect the mmWave blockages.   

\subsection{Contribution}

This paper considers dual-band systems where the base station and mobile users employ both sub-6GHz and mmWave transceivers. It establishes the theoretical conditions under which sub-6GHz channels can be used to directly predict mmWave beams and blockages, and shows that deep learning models can be efficiently leveraged to achieve these objectives. The main contributions of this paper can be summarized as follows:
\begin{itemize}
	\item We prove that for any given environment, there exists a mapping function that can predict the optimal mmWave beam (out of a codebook) directly from the sub-6GHz channel if certain conditions are satisfied. These mapping functions, however, are hard to characterize analytically which motivated leveraging deep learning models to learn them. 
	
	\item Leveraging the universal approximation theory \cite{Hornik1989}, we prove that large enough neural networks can learn how to predict the optimal mmWave beams directly from sub-6GHz channel vectors with a success probability that can be made arbitrarily close to one.

	\item We show that a similar result can be established for blockage prediction, and identify the conditions under which the sub-6GHz channels can be used to predict whether or not the mmWave LOS link is obstructed. We also prove that large enough neural networks can be exploited to learn this blockage prediction with an arbitrarily high success probability.
	
	\item We propose a deep neural network model that efficiently uses the sub-6GHz channels to predict the optimal mmWave beams and blockage status. The developed model leverages transfer learning  to reduce the learning time overhead.	
\end{itemize}  
The proposed deep learning based mmWave beam and blockage prediction solutions were evaluated using the publicly-available dataset DeepMIMO \cite{DeepMIMO}. This dataset generates sub-6GHz and mmWave channels using the accurate 3D ray-tracing simulator Wireless InSite \cite{Remcom} which incorporates the materials' dielectric properties at the two bands.  The simulation results confirm the promising capability of deep learning models in learning how to predict the mmWave beams and blockages using sub-6GHz channels, as explained in detail in \sref{sec:Results}.

\textbf{Notation}: We use the following notation throughout this paper: $\bA$ is a matrix, $\ba$ is a vector, $a$ is a scalar, $\cA$ is a set of scalars, and $\boldsymbol{\mathcal{A}}$ is a set of vectors. $\|\ba \|_p$ is the p-norm of $\ba$. $|\bA|$ is the determinant of $\bA$, whereas $\bA^T$, $\bA^{-1}$ are its transpose and inverse. $\bI$ is the identity matrix.  $\cN(\bm,\bR)$ is a complex Gaussian random vector with mean $\bm$ and covariance $\bR$.


\section{System and Channel Models} \label{sec:System}

Consider the system model in \figref{fig:System} where a base station (BS) is communicating with one mobile user. The BS is assumed to employ two transceivers; one transceiver is working at sub-6GHz and employs $M_\text{sub-6}$ antennas, and the other one is operating at a mmWave frequency band and adopts an  $M_\text{mmW}$-element antenna array. For simplicity, we assume that the two antenna arrays belonging to the mmWave and sub-6GHz transceivers are co-located. As will be discussed in \sref{secIII}, however, the proposed concepts in this paper can be  extended to other setups with separated and distributed arrays. The mobile user is assumed to employ a single antenna at both mmWave and sub-6GHz bands. Next, we summarize the  system operation and the adopted channel model.

\begin{figure}[t]
	\centering
	\includegraphics[width=.9\linewidth]{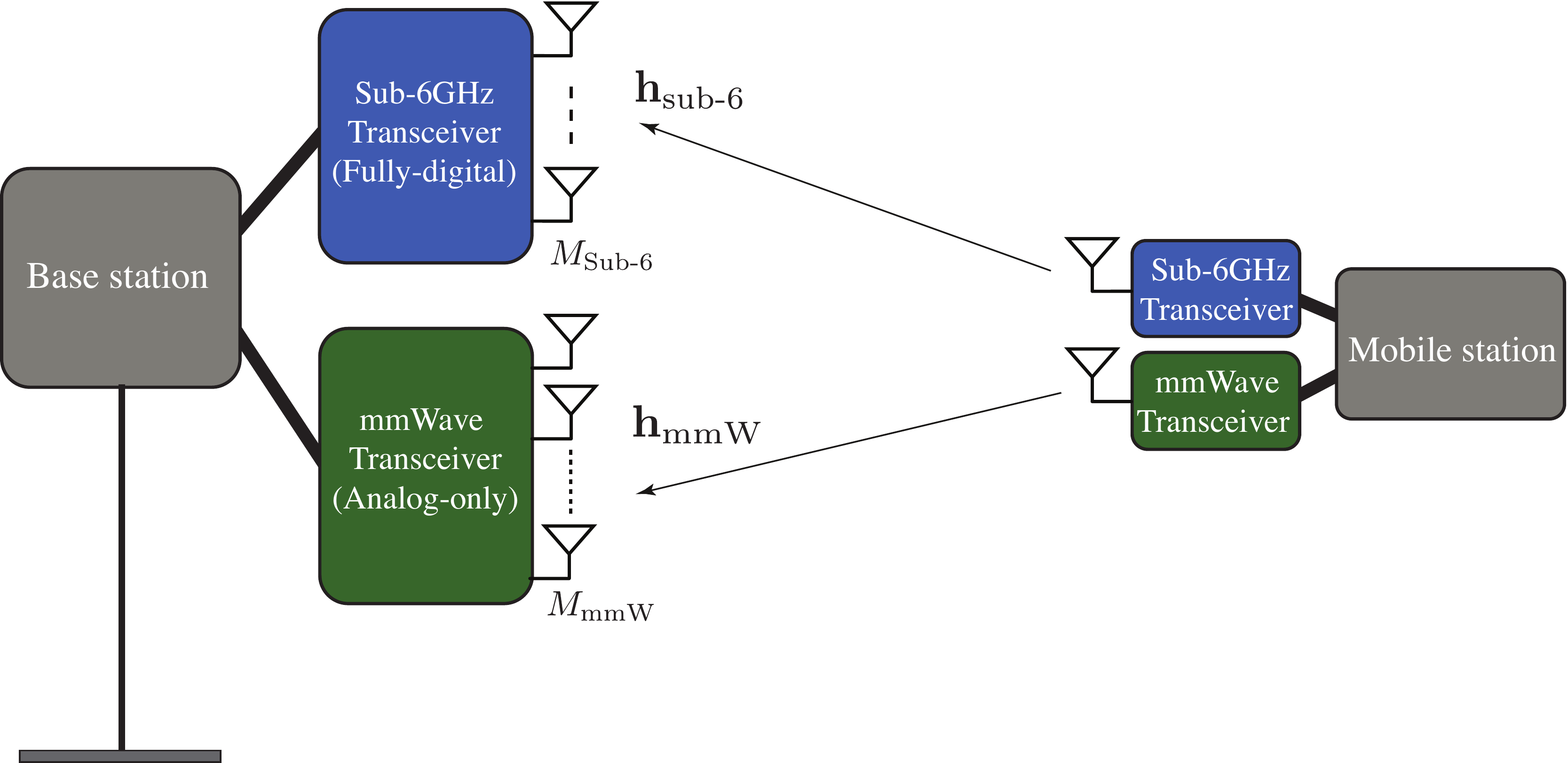}
	\caption{The adopted system model where a base station and a mobile user communicate over both sub-6GHz  and mmWave bands. The basestation and mobile user are assumed to employ co-located sub-6GHz and mmWave arrays. }
	\label{fig:System}
\end{figure}

\textbf{System Operation:} In this paper, we consider a system operation where the uplink signaling happens at the sub-6GHz band while the downlink data transmission occurs at the mmWave band. If $\bh_\text{sub-6}[k] \in \mathbb{C}^{M_\text{sub-6} \times 1}$ denotes the uplink channel vector from the mobile user to the sub-6 GHz BS array at the $k$th subcarrier,  $k=1,...,K$, then the uplink received signal at the BS sub-6GHz array can be written as 
\begin{equation}
\by_\text{sub-6} [k]=\bh_\text{sub-6}[k] s_\text{p}[k] + \bn_\text{sub-6}[k],
\end{equation}
where $s_\text{p}[k]$ represents the uplink pilot signal that satisfies  $\bbE\left|s_\text{p}[k]\right|^2=\frac{P_{\text{sub-6}}}{K}$, with $P_{\text{sub-6}}$ denoting the uplink transmit power from the mobile user. The vector $\bn_\text{sub-6} [k] \sim \mathcal{N}_\mathbb{C}\left(\boldsymbol{0}, \sigma^2 \boldsymbol{I}\right)$ is the receive noise at the BS sub-6GHz array. The sub-6 GHz transceiver is assumed to employ a fully-digital architecture, which allows for the channel estimation process to be done in the baseband.  

For the downlink transmission, the BS employs the mmWave transceiver. Due to the large number of antennas and the high cost and power consumption of the RF chains at the mmWave frequency bands, the mmWave transceivers normally employ analog-only or hybrid analog digital architectures \cite{HeathJr2016,Li2019}. Following that, the mmWave transceiver is assumed to adopt an analog-only architecture with one RF chain and $M_\text{mmW}$ phase shifters. If $\bff \in \mathbb{C}^{M_\text{mmW} \times 1}$ denotes the downlink beamforming vector, then the received signal  at the mobile user can then be expressed as 
\begin{equation}
y_\text{mmW} [\bar{k}]=\bh_\mathrm{mmW}^T[\bar{k}]\bff s_\text{d} + n_\mathrm{mmW}[\bar{k}],
\end{equation}
where $\bh_\mathrm{mmW} [\bar{k}] \in \mathbb{C}^{M_\text{mmW} \times 1}$ represents the uplink channel from the mobile user to the BS mmWave array at the $\bar{k}$th subcarrier, $\bar{k}=1,2,...,\bar{K}$. Due to the hardware constraints on the mmWave analog beamforming vectors, these vectors are normally selected from quantized codebooks. Therefore, we assume that the beamforming vector $\bff$ can take one of candidate values collected in the codebook $\mathcal{F}$, i.e., $\bff \in \mathcal{F}$, with cardinality $\left|\mathcal{F}\right|=N_\mathrm{CB}$.  

\textbf{Channel Model:} This paper adopts a geometric (physical) channel model  for the sub-6GHz and mmWave channels \cite{HeathJr2016}. With this model, the mmWave channel (and similarly the sub-6GHz channel) can be written as 
\begin{equation}
\bh_\mathrm{mmW} [k] = \sum_{d=0}^{D-1} \sum_{\ell=1}^L \alpha_\ell e^{- j \frac{2 \pi k}{K} d} p\left(dT_\mathrm{S} - \tau_\ell\right) \ba\left(\theta_\ell, \phi_\ell\right),
\end{equation} 
where $L$ is number of channel paths, $\alpha_\ell, \tau_\ell, \theta_\ell, \phi_\ell$ are the path gains (including the path-loss), the delay, the azimuth angle of arrival (AoA), and elevation AoA, respectively, of the $\ell$th channel path. $T_\mathrm{S}$ represents the sampling time while $D$ denotes the cyclic prefix length (assuming that the maximum delay is less than $D T_\mathrm{S}$). Note that the advantage of the physical channel model is its ability to capture the physical characteristics  of the signal propagation including the dependence on the environment geometry, materials, frequency band, etc., which is crucial for our machine learning based beam and blockage prediction approaches. The parameters of the geometric channel models, such as the angles of arrival and path gains, will be obtained using accurate 3D ray-tracing simulations, as will be discussed in detail in \sref{sec:Results}.

\section{Problem Definition}

Adopting the dual-band system model described in \sref{sec:System}, the objective of this paper is to leverage the uplink channel knowledge at sub-6GHz band to enhance the achievable rate and reliability of the downlink mmWave link. More specifically, we focus on two important problems: (i) how can the  uplink sub-6GHz channel be exploited to find the optimal downlink mmWave beamforming vector that maximizes the achievable rate and (ii) how can the knowledge of the uplink sub-6GHz channel be used to infer whether or not the line-of-sight link to the mobile user is blocked. Next, we formulate these two problems. 

 \textbf{Problem 1: Beam Prediction} Consider the system and channel models in \sref{sec:System}, the downlink  achievable rate for a mmWave channel $\bh_\text{mmW}$ and a beamforming vector $\bff$ is written as 
\begin{equation} 
R\left(\left\{\bh_\text{mmW}[\bar{k}]\right\}, \bff \right)=\sum_{\bar{k}=1}^{\bar{K}} \log_2\left(1+\mathsf{SNR} \left|\bh_\text{mmW}[\bar{k}]^T \bff\right|^2\right),
\label{eq:ratef}
\end{equation}
with the per-subcarrier SNR defined as $\mathsf{SNR}=\frac{P_\text{mmW}}{K \sigma_\text{mmW}^2}$.  The optimal beamforming vector $\bff^\star$ that maximizes $R\left(\left\{\bh_\text{mmW}[\bar{k}]\right\}, \bff \right)$ is given by the exhaustive search
\begin{equation}  
\bff^\star=\argmax_{\bff \in \mathcal{F}} R\left(\left\{\bh_\text{mmW}[\bar{k}]\right\}, \bff \right), \label{eq:opt_f}
\end{equation} 
yielding the optimal rate $R^\star\left(\left\{\bh_\text{mmW}[\bar{k}]\right\} \right)$. 
For ease of exposition, we drop the sub-carrier indices in the rest of the paper; i.e., we will use $\bh_\text{mmW}$ and $\bh_\text{sub-6}$ to mean  $\{\bh_\text{mmW}[\bar{k}]\}$ and $\{\bh_\text{sub-6}[k]\}$.
It is important to note here that the beamforming vector $\bff$ is assumed to be implemented in the analog/RF domain as discussed in \sref{sec:System}. Therefore, the same beamforming vector is applied to all the subcarriers. Further, this beamforming vector can only be selected from the codebook $\mathcal{F}$. These constraints on the beamforming vector $\bff$ renders the achievable rate optimization problem as a non-convex problem with the optimal solution only found via the exhaustive search in  \eqref{eq:opt_f}.
Performing this  search, however, requires either estimating the mmWave channel $\bh_\text{mmW}$ or an online exhaustive  beam training, both of which are associated with large training overhead.  To reduce (or eliminate) this training overhead, the objective of this work is to exploit the sub-6GHz channels $\bh_\text{sub-6}$ to decide on the optimal beamforming vector. If $\hat{\bff} \in \mathcal{F}$ denotes the predicted beamforming vector based on the knowledge of $\bh_\text{sub-6}$, then the first objective of this work is to maximize the success probability in predicting optimal beamforming vector $\bff^\star$, defined as 
\begin{equation}
\kappa_1=\mathbb{P}\left(\hat{\bff}=\bff^\star\left|\left\{\bh_\text{sub-6}\right\}\right.\right).
\end{equation}

 \textbf{Problem 2: Blockage Prediction} The sensitivity of mmWave signals to blockages can critically impact the reliability of the high frequency systems.  If the status of the link in terms of line-of-sight (LOS) (unblocked) or non-LOS (blocked) can be predicted, this can enhance the system reliability via, for example, proactively handing over the user to another base station/access point \cite{Alkhateeb2018a}. In this work, we explore the possibility of using sub-6 GHz channels to predict whether the link connecting the base station and the user is blocked (NLOS) or unblocked (LOS).  let $s\in\mathcal{B}$ denote the correct (ground-truth) blocked/unblocked status of the communication link between the base station and the user, with $s = 1$ indicating a blocked link and $s=0$ indicating an unblocked link.  If $\hat{s}$ is the predicted link status using the sub-6 GHz channel knowledge, then the objective of the second problem in this work is to maximize the success probability of predicting the correct blockage status defined as 
 \begin{equation}
 \kappa_2=\mathbb{P}\left(\hat{s}=s\left|\left\{\bh_\text{sub-6}\right\}\right.\right).
 \end{equation}

In the next two sections, we present our proposed solutions that leverage machine learning tools to address the formulated mmWave beam and blockage prediction problems.

\section{Predicting mmWave Beams using Sub-6GHz Channels} \label{secIII}

Enabling the high data rate gains at mmWave communication systems requires the deployment of large antenna arrays at the transmitters and/or the receivers. Finding the best beamforming vectors $\bff^\star$ for these arrays is normally done through an exhaustive search over a large codebook of candidate beams, which is associated with large training overhead \cite{Wang2009,Hur2013}. In this section, we investigate the feasibility of exploiting sub-6GHz channels to predict/infer mmWave beams. If this is possible, we can expect dramatic savings in the mmWave beam training overhead as sub-6GHz channels can be easily estimated with a few pilots; ideally one pilot is required to estimate the uplink sub-6GHz channels. Leveraging sub-6GHz channels to predict mmWave beams is also motivated by the fact that  future wireless networks, such as 5G, will likely be dual-band---operating at both sub-6GHz and mmWave bands. Next, we will first reveal in \sref{subsec:Map_beam} that for any given environment, there exist a deterministic  mapping from sub-6GHz channels to the optimal mmWave beams under certain conditions. Then, we will show in \sref{subsec:DL_beam} how deep learning models can be exploited to predict the optimal mmWave beams using sub-6GHz channels with a probability of error that can be made arbitrarily small.  

\subsection{Mapping Sub-6GHz Channels to mmWave Beams} \label{subsec:Map_beam}

This section establishes the theoretical foundation for our proposed solution that predicts mmWave beams using sub-6GHz channels. More specifically, we will prove that, under certain condition, there exists a deterministic mapping from sub-6GHz channels to mmWave channels and beams. This proof extends the channel mapping concept that we  proposed in \cite{Alrabeiah2019}. First, consider the dual-band system and channel models described in \sref{sec:System}. Let $\{\bx_u\}$ represent the set of candidate user positions, with $\bx_u$ denoting the position of user $u$. Further, let $\bh^u_\text{sub-6}, \bh^u_\text{mmW}$ denote the channels from user $u$ to the sub-6GHz and mmWave antenna arrays. Now, we can define the following mapping functions, $\boldsymbol{\psi}_\text{sub-6},\boldsymbol{\psi}_\text{mmW}$, from the set of candidate positions $\{\bx_u\}$ to the corresponding sub-6GHz and mmWave channels,
\begin{align}
& \boldsymbol{\psi}_\text{sub-6}:\{\bx_u\} \rightarrow \{\bh^u_\text{sub-6}\}, \\ 
& \boldsymbol{\psi}_\text{mmW}:\{\bx_u\} \rightarrow \{\bh^u_\text{mmW}\}.
\end{align} 

Further, for any given  mmWave channel $\bh^u_\text{mmW}$ and beamforming vector $\bff_n \in \mathcal{F}$, the achievable rate $R\left(\bh^u_\text{mmW},\bff_n\right)$ is calculated by \eqref{eq:ratef}. Based on that, we define the position to achievable rate mapping functions $\bg^n(.), n=1,2,..., \left|\mathcal{F}\right|$ as 
\begin{equation}
\bg^n: \{\bx_u\} \rightarrow \{R\left(\bh^u_\text{mmW},\bff_n\right) \}, \ \ n=1, 2, ..., \left|\mathcal{F}\right|.
\end{equation}
\noindent Note that the existence of these mapping functions $\bg^n, \forall n$ follows directly from the existence of the position to mmWave channel mapping, $\boldsymbol{\psi}_\text{mmW}$, and the deterministic achievable rate function in \ref{eq:ratef} that relates the mmWave channels and the achievable rates with the $\left|\mathcal{F}\right|$ beamforming vectors. Next, we follow \cite{Alrabeiah2019} and adopt the following bijectiveness assumption of the mapping function $ \boldsymbol{\psi}_\text{sub-6}$ that maps the positions to sub-6GHz channels.
\begin{assumption} \label{assume1}
	The position to sub-6GHz channel mapping function, $\boldsymbol{\psi}_\emph{\text{sub-6}}$, is bijective.
\end{assumption}

Assumption \ref{assume1} means that any two user positions in $\{\bx_u\}$  have different sub-6GHz channel vectors, i.e., two  positions can result in the same sub-6GHz channels. This bijectiveness condition depends on the number of antennas, the array geometry, the number of paths, and the surrounding environment among other factors. In general, however, it is possible to show that a few antennas could be sufficient to make this bijectiveness assumption satisfied with very high probability \cite{Alrabeiah2019,Vieira2017}. The importance of this bijectiveness condition in Assumption \ref{assume1} is that it guarantees the existence of the inverse mapping $\boldsymbol{\psi}^{-1}_\emph{\text{sub-6}}(.)$ that maps the sub-6GHz channels in $\{\bh^u_\text{sub-6}\}$ to the corresponding positions in $\{\bx_u\}$. Next, we present the main proposition on the existence of the mapping from the sub-6GHz channels to the optimal mmWave beams.
\begin{proposition} For any given communication environment, and under Assumption \ref{assume1}, there exists a sequence of sub-6GHz to achievable rate mapping functions $\boldsymbol{\Phi}^n_\emph{\text{sub-6}}, n=1, 2, ..., \left|\mathcal{F}\right|$ that equal 
	\begin{equation}
			\boldsymbol{\Phi}^n_\emph{\text{sub-6}}: \{\bh_\emph{\text{sub-6}}^u\} \rightarrow \{R\left(\bh^u_\emph{\text{mmW}},\bff_n\right)\}, \ \ n=1, 2, ..., \left|\mathcal{F}\right|,
	\end{equation}
and with the optimal mmWave beamforming vector $\bff^\star$ for user $u$ obtained as $\bff^\star=\bff_{n^\star}$, with
\begin{equation}
n^\star=\argmax_{n \in \{1,2,...,\left|\mathcal{F}\right|\}} \ \boldsymbol{\Phi}^n_\emph{\text{sub-6}}\left(\bh_\emph{\text{sub-6}}^u\right).
\end{equation}
\label{prop1}
\end{proposition}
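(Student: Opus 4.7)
The plan is to establish the proposition by direct function composition, leveraging Assumption~\ref{assume1} to invert the position-to-sub-6GHz mapping and then composing with the already-established position-to-rate mapping $\bg^n$. The whole argument is essentially an existence argument: once we have a left inverse for $\boldsymbol{\psi}_\text{sub-6}$ on the set of candidate channels, the desired $\boldsymbol{\Phi}^n_\text{sub-6}$ will be obtained ``for free'' as a composition, and the $\argmax$ characterization will follow from the definition of $\bff^\star$ in \eqref{eq:opt_f}.

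First, I would invoke Assumption~\ref{assume1}: the map $\boldsymbol{\psi}_\text{sub-6}: \{\bx_u\} \to \{\bh^u_\text{sub-6}\}$ is bijective, so its inverse $\boldsymbol{\psi}^{-1}_\text{sub-6}: \{\bh^u_\text{sub-6}\} \to \{\bx_u\}$ exists and is well-defined on every sub-6GHz channel in the candidate set. Next, I would recall that the position-to-rate mapping $\bg^n$ for beam $\bff_n$ was already shown (just before the proposition statement) to exist as the composition of the deterministic position-to-mmWave-channel map $\boldsymbol{\psi}_\text{mmW}$ with the deterministic rate expression in \eqref{eq:ratef}. No invertibility of $\boldsymbol{\psi}_\text{mmW}$ is required here, only that it is a well-defined function.

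I would then define, for each $n=1,2,\ldots,\left|\mathcal{F}\right|$,
\begin{equation}
\boldsymbol{\Phi}^n_\text{sub-6} \;\bydef\; \bg^n \circ \boldsymbol{\psi}^{-1}_\text{sub-6},
\end{equation}
and verify directly that for any user $u$ with sub-6GHz channel $\bh^u_\text{sub-6}$,
\begin{equation}
\boldsymbol{\Phi}^n_\text{sub-6}\!\left(\bh^u_\text{sub-6}\right) = \bg^n\!\left(\boldsymbol{\psi}^{-1}_\text{sub-6}(\bh^u_\text{sub-6})\right) = \bg^n(\bx_u) = R\!\left(\bh^u_\text{mmW},\bff_n\right),
\end{equation}
which is exactly the claim of the first part of the proposition. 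For the $\argmax$ characterization, I would observe that \eqref{eq:opt_f} defines $\bff^\star$ as the beam that maximizes $R(\bh^u_\text{mmW}, \bff_n)$ over $n$; substituting the identity just derived, this is the same as maximizing $\boldsymbol{\Phi}^n_\text{sub-6}(\bh^u_\text{sub-6})$ over $n$, so $\bff^\star = \bff_{n^\star}$ with $n^\star$ as stated.

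There is no real analytical obstacle: the content of the result lies entirely in the bijectiveness hypothesis, which is what makes the inversion $\boldsymbol{\psi}^{-1}_\text{sub-6}$ well-defined. The only subtlety worth flagging explicitly is that bijectiveness is required for the mapping to be a \emph{function} of the sub-6GHz channel alone (otherwise the same $\bh^u_\text{sub-6}$ could correspond to two positions with different mmWave rates, and no deterministic predictor could exist). I would close with a short remark reiterating that the proposition only asserts \emph{existence} of $\boldsymbol{\Phi}^n_\text{sub-6}$, not an analytical form, which motivates the neural-network approximation developed in the subsequent subsection.
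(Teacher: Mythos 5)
Your proposal is correct and follows essentially the same route as the paper: both construct $\boldsymbol{\Phi}^n_\text{sub-6}$ as the composition $\bg^n \circ \boldsymbol{\psi}^{-1}_\text{sub-6}$, with Assumption~\ref{assume1} guaranteeing the inverse exists, and then read off the $\argmax$ characterization from \eqref{eq:opt_f}. Your version is slightly more explicit in writing out the composition identity step by step, but the underlying argument is identical to the paper's.
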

\begin{proof}
	The proof follows from the existence of the sub-6GHz channel to position mapping function $\boldsymbol{\psi}^{-1}_\text{sub-6}(.)$ and the existence of the position to mmWave achievable rate mapping functions $\bg^n(.)$. This leads to the existence of the composite mapping functions $\boldsymbol{\Phi}^n_\text{sub-6}$ since the co-domain of $\boldsymbol{\psi}^{-1}_\text{sub-6}(.)$ is the same as the domain of $\bg^n(.)$, and both equal to $\{\bx_u\}$. Finally, since the mapping functions $\boldsymbol{\Phi}^{n}_\text{sub-6}(.)$  result in the achievable rates with the candidate beams, the optimal beamforming vector $\bff^\star$ is found via the exhaustive search in \eqref{eq:opt_f}.
\end{proof}

Proposition \ref{prop1} shows that, under certain conditions, there exist mapping functions $\boldsymbol{\Phi}^n_\text{sub-6}, \forall n$, that can be leveraged to predict the optimal mmWave beam using sub-6GHz channels. Despite the existence of this mapping, though, it is very hard to leverage it using classical (non machine learning) solutions as this mapping functions are normally very hard to be characterized analytically. This motivates utilizing deep learning to learn these non-trivial mapping functions.  

\subsection{Deep Learning Based Beam Prediction} \label{subsec:DL_beam} 

Deep learning models have the interesting capability of learning and approximating non-trivial functions. Leveraging these models can effectively enable the prediction of the optimal mmWave beams directly from the knowledge of the sub-6GHz channels with an arbitrarily small error. Next, we use the universal approximation theory \cite{UnivApprox}, to prove that. 
\begin{proposition}	\label{prop2}
Let $\boldsymbol{\Pi}_{N}^n(.)$ represent the output of a dense neural network that consists of a single hidden layer of $N$ nodes. Then, for any $\epsilon_n>0$, and a continuous achievable rate mapping function $\boldsymbol{\Phi}_\emph{\text{sub-6}}^n\left(.\right)$, there exists a positive constant $N$ such as 
\begin{equation}
\sup_{\bh \in \{\bh_{\emph{\text{sub-6}}^u}\}} \left|\boldsymbol{\Pi}_{N}(\bh, \Omega)-\boldsymbol{\Phi}_\emph{\text{sub-6}}^n\left(\bh\right)\right|<\epsilon_n,
\end{equation}
where $\Omega$ denotes the set of neural network parameters.
\end{proposition}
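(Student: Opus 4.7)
The plan is to invoke the universal approximation theorem of Hornik/Cybenko \cite{UnivApprox} directly, with the bulk of the work going into verifying its hypotheses for the function $\boldsymbol{\Phi}_\text{sub-6}^n$ inherited from Proposition \ref{prop1}. First, I would recast the set-up in the standard form required by the theorem. The sub-6GHz channel is complex-valued, so I would identify each $\bh \in \{\bh^u_\text{sub-6}\}$ with its real embedding $\tilde{\bh} = [\Re(\bh)^T, \Im(\bh)^T]^T \in \bbR^{2 M_\text{sub-6} K}$, turning $\boldsymbol{\Phi}_\text{sub-6}^n$ into a map from a subset of $\bbR^{2 M_\text{sub-6} K}$ into $\bbR$ (the achievable rate is real-valued). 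The dense single-hidden-layer network $\boldsymbol{\Pi}_N^n(\cdot,\Omega)$ is parameterised by $\Omega=(\bW_1,\bb_1,\bw_2,b_2)$, where $\bW_1\in\bbR^{N\times 2M_\text{sub-6} K}$, $\bb_1\in\bbR^N$, $\bw_2\in\bbR^N$, $b_2\in\bbR$, and uses a fixed nonconstant, bounded, monotonically increasing continuous activation $\sigma(\cdot)$ (e.g.\ sigmoid).

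Next, I would verify the two nontrivial hypotheses of the universal approximation theorem: compactness of the input domain and continuity of the target mapping. Compactness follows from modelling the candidate user positions $\{\bx_u\}$ as a compact subset of the physical environment (a bounded service region) and noting that the position-to-sub-6GHz-channel map $\boldsymbol{\psi}_\text{sub-6}$ is continuous in the geometric channel model of \sref{sec:System}; hence its image $\{\bh^u_\text{sub-6}\}$, and therefore its real embedding, is a compact subset of $\bbR^{2 M_\text{sub-6} K}$. For continuity of $\boldsymbol{\Phi}_\text{sub-6}^n$, I would use the decomposition $\boldsymbol{\Phi}_\text{sub-6}^n=\bg^n\circ\boldsymbol{\psi}_\text{sub-6}^{-1}$ from Proposition \ref{prop1}. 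The map $\bg^n$ is continuous because it is the composition of the continuous position-to-mmWave-channel map $\boldsymbol{\psi}_\text{mmW}$ with the smooth rate expression \eqref{eq:ratef}. The map $\boldsymbol{\psi}_\text{sub-6}^{-1}$ is continuous by appealing to the standard topological fact that a continuous bijection from a compact space to a Hausdorff space is a homeomorphism (Assumption \ref{assume1} gives bijectivity, and compactness of $\{\bx_u\}$ has just been established); continuity of $\boldsymbol{\Phi}_\text{sub-6}^n$ then follows from the continuity of compositions.

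With compactness of the domain and continuity of $\boldsymbol{\Phi}_\text{sub-6}^n$ in hand, the conclusion is immediate: applying the universal approximation theorem to the continuous real-valued function $\boldsymbol{\Phi}_\text{sub-6}^n$ on a compact subset of $\bbR^{2 M_\text{sub-6} K}$, for any $\epsilon_n>0$ there exist $N\in\bbN$ and parameters $\Omega$ such that
\begin{equation}
\sup_{\bh \in \{\bh^u_\text{sub-6}\}} \left|\boldsymbol{\Pi}_{N}^n(\bh,\Omega)-\boldsymbol{\Phi}_\text{sub-6}^n(\bh)\right| < \epsilon_n,
\end{equation}
which is exactly the claim of the proposition.

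The main obstacle I anticipate is the continuity argument for $\boldsymbol{\Phi}_\text{sub-6}^n$, specifically the continuity of $\boldsymbol{\psi}_\text{sub-6}^{-1}$. Bijectivity alone (Assumption \ref{assume1}) does not guarantee continuity of the inverse on a general set, so the proof must either invoke compactness-plus-Hausdorff as above, or alternatively strengthen Assumption \ref{assume1} to require $\boldsymbol{\psi}_\text{sub-6}$ to be a homeomorphism. Once this topological point is addressed, the remainder of the proof reduces to a clean citation of \cite{UnivApprox}.
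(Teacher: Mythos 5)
Your proof is correct and takes essentially the same route as the paper, whose entire argument is to cite the universal approximation theorem of Hornik et al.\ after observing that the set of sub-6GHz channels is compact (closed and bounded). Note that continuity of $\boldsymbol{\Phi}_\text{sub-6}^n$ is already a stated hypothesis of the proposition, so your homeomorphism argument for $\boldsymbol{\psi}_\text{sub-6}^{-1}$ --- while a worthwhile observation about where that hypothesis would come from --- is extra work the paper does not undertake.
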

\begin{proof}
	Proposition \ref{prop2} follows directly from the universal approximation theorem \cite[Theorem 2.2]{Hornik1989} by noticing that the set of sub-6GHz channels $\{\bh_\mathrm{sub-6}\}$ is a compact set since it is closed and bounded.
\end{proof}

Since the function $\boldsymbol{\Phi}_\text{sub-6}^n\left(\bh\right)$ maps the sub-6GHz channels to the mmWave achievable rate using the beamforming vector $\bff_n \in \mathcal{F}$, Proposition \ref{prop2} simply means that using a large enough neural network, we can predict the mmWave achievable rate $\hat{R}(\bh_\text{mmW},\bff_n)$ associated with every beam $\bff_n \in \mathcal{F}$ with arbitrarily small error. Next, we make an assumption on the codebook $\mathcal{F}$ before presenting the main result in Corollary \ref{cor1}.

\begin{assumption} \label{assume_CB}
	The mmWave beamforming codebook $\mathcal{F}$ satisfies the following condition
	\begin{equation}
 R\left(\bh_\emph{\text{mmW}}, \bff^\star\right) - R\left(\bh_\emph{\text{mmW}}, \bff_n\right) > 0, \ \ 	\forall \bh_\emph{\text{mmW}} \in \{\bh^u_\emph{\text{mmW}}\},
	\end{equation}
	where $\bff_n, \bff^\star \in \mathcal{F}$ and $\bff_n \neq \bff^\star$.
\end{assumption}

Assumption 1 simply means that there is only one optimal beamforming codeword for any channel $\bh_\text{mmW} \in \left\{\bh^u_\text{mmW}\right\}$. It is important to note here that while we need this assumption to prove the result in the following Corollary, violating this condition on the codebook $\mathcal{F}$ leads to the trivial case where two beamforming vectors can achieve exactly the optimal rate. Next, we present Corollary \ref{cor1} that establishes the feasibility of predicting the optimal mmWave beams using sub-6GHz channels via deep neural networks.  

\begin{corollary} \label{cor1}
Let $\boldsymbol{\Pi}^n_{N}(.), n=1,2,...,|\mathcal{F}|$ represent the output of a dense neural network that consists of a single hidden layer of $N$ nodes. Further, define the predicted beamforming vector $\hat{\bff}=\bff_{\hat{n}} \in \mathcal{F}$ with $\hat{n}=\argmax_{n=1,2,...,|\mathcal{F}|} \boldsymbol{\Pi}^n_{N}(.)$. Then, for any $\epsilon>0$, and continuous achievable rate mapping functions $\boldsymbol{\Phi}_\emph{\text{sub-6}}^n\left(.\right), n=1,2,...,|\mathcal{F}|$, there exists s positive constant $N$ large enough such as 
\begin{equation}
\kappa_1=\mathbb{P}\left(\hat{\bff}=\bff^\star\left|\bh_\emph{\text{sub-6}}\right.\right) > 1- \epsilon.
\end{equation}
\end{corollary}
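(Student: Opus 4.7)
The strategy is to reduce the event $\{\hat{\bff} = \bff^\star\}$ to a uniform approximation event on the $|\mathcal{F}|$ learned rate maps, and then apply Proposition~\ref{prop2} finitely many times. The classical argmax-stability observation is that if each estimate $\boldsymbol{\Pi}^n_N(\bh)$ lies within $\delta$ of the true rate map $\boldsymbol{\Phi}^n_{\text{sub-6}}(\bh)$, and the gap between the optimal rate and the second-best rate exceeds $2\delta$, then the learned argmax must coincide with the true argmax, namely $n^\star$. So the entire argument boils down to turning Proposition~\ref{prop2} into a per-codebook-index uniform bound, combining those bounds, and invoking this margin inequality.

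\textbf{Main steps.} First, using Assumption~\ref{assume_CB}, I would define the pointwise margin
\begin{equation}
\Delta(\bh) \;:=\; \boldsymbol{\Phi}^{n^\star(\bh)}_{\text{sub-6}}(\bh) \;-\; \max_{n \neq n^\star(\bh)} \boldsymbol{\Phi}^n_{\text{sub-6}}(\bh) \;>\; 0,
\end{equation}
and verify that $\Delta$ is continuous in $\bh$ (inherited from continuity of $\boldsymbol{\psi}_{\text{mmW}}$, $\boldsymbol{\psi}^{-1}_{\text{sub-6}}$, and the rate function), so that on the compact channel set $\{\bh^u_{\text{sub-6}}\}$ it attains a strictly positive infimum $\Delta_{\min}$. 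Next, for a fixed $\epsilon > 0$ I would set $\delta = \Delta_{\min}/3$ and invoke Proposition~\ref{prop2} once per index $n=1,\ldots,|\mathcal{F}|$, obtaining widths $N_n$ such that $\sup_\bh |\boldsymbol{\Pi}^n_{N_n}(\bh) - \boldsymbol{\Phi}^n_{\text{sub-6}}(\bh)| < \delta$. Setting $N = \max_n N_n$ and running the $|\mathcal{F}|$ subnetworks in parallel inside a single hidden layer (concatenating units and padding with zero weights where needed), the $n$-th output approximates the true rate uniformly within $\delta$. Finally, for every $\bh$ and every $n \neq n^\star$, the chain
\begin{equation}
\boldsymbol{\Pi}^{n^\star}_N(\bh) - \boldsymbol{\Pi}^n_N(\bh) \;\geq\; \boldsymbol{\Phi}^{n^\star}_{\text{sub-6}}(\bh) - \boldsymbol{\Phi}^n_{\text{sub-6}}(\bh) - 2\delta \;\geq\; \Delta_{\min} - 2\delta \;>\; 0
\end{equation}
forces $\hat{n} = n^\star$, giving $\kappa_1 = 1 > 1 - \epsilon$.

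\textbf{Main obstacle.} The delicate point is the first step: passing from the pointwise positivity supplied by Assumption~\ref{assume_CB} to a \emph{uniform} lower bound $\Delta_{\min} > 0$. This relies on both compactness of $\{\bh^u_{\text{sub-6}}\}$ and joint continuity of every $\boldsymbol{\Phi}^n_{\text{sub-6}}$; the latter is inherited via the composition $\boldsymbol{\Phi}^n_{\text{sub-6}} = \bg^n \circ \boldsymbol{\psi}^{-1}_{\text{sub-6}}$, which is fine under smooth path-gain/geometry assumptions, but if the optimal beam index $n^\star(\bh)$ switches on a dense boundary of the channel manifold, one must either invoke a measure-theoretic version — controlling the probability mass on the sublevel set $\{\bh : \Delta(\bh) < \delta\}$ and absorbing it into the $\epsilon$ slack — or strengthen Assumption~\ref{assume_CB} to a strict-margin form. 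This is presumably why the corollary is stated with an $\epsilon$ slack rather than as an exact equality $\kappa_1 = 1$. The remaining bookkeeping, namely the parallel composition of subnetworks into a single wider hidden layer and the argmax-stability inequality above, is routine.
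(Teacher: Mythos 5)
Your proof is correct and follows the same skeleton as the paper's (uniform approximation of each $\boldsymbol{\Phi}^n_{\text{sub-6}}$ via Proposition~\ref{prop2}, then an argmax-stability argument powered by the margin from Assumption~\ref{assume_CB}), but you close the argument differently. The paper stays probabilistic: it writes $\kappa_1 \geq 1 - \mathbb{P}\left(\boldsymbol{\Phi}^{n^\star} - \boldsymbol{\Phi}^{\hat{n}} < 2\bar{\epsilon}\right)$ with $\bar{\epsilon} = \max_n \epsilon_n$ and then simply asserts that, by Proposition~\ref{prop2} and Assumption~\ref{assume_CB}, this probability can be driven below $\epsilon$ by choosing $\bar{\epsilon}$ small enough --- the step you correctly flag as delicate, since pointwise positivity of the margin does not by itself control the mass of the sublevel set $\{\bh : \Delta(\bh) < 2\bar{\epsilon}\}$. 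You instead extract a uniform margin $\Delta_{\min} > 0$ from continuity of the $\boldsymbol{\Phi}^n_{\text{sub-6}}$ (which the corollary hypothesizes) and compactness of $\{\bh^u_{\text{sub-6}}\}$ (which the paper already invokes in proving Proposition~\ref{prop2}); note that under Assumption~\ref{assume_CB} the maximizer is unique at every point of the set, so $\Delta(\bh) = \max_n \boldsymbol{\Phi}^n - \mathrm{2nd\text{-}max}_n\, \boldsymbol{\Phi}^n$ is a strictly positive continuous function and your infimum argument goes through without needing the measure-theoretic fallback. This buys a strictly stronger conclusion ($\kappa_1 = 1$, not merely $> 1-\epsilon$) and fills the one step the paper leaves implicit; you also handle the bookkeeping the paper skips, namely assembling the $|\mathcal{F}|$ single-output approximators of Proposition~\ref{prop2} into one network of width $N = \max_n N_n$. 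The only cosmetic mismatch is that your deterministic guarantee makes the $\epsilon$ slack in the statement vacuous, which is fine but worth a remark.
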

\begin{proof}
	See Appendix \ref{app1}.
\end{proof}

Corollary \ref{cor1} is very interesting as it proves that it is possible to use deep neural networks to predict the \textit{optimal} mmWave beamforming vector directly from the knowledge of the sub-6GHz channels once the achievable rate mapping functions, $\boldsymbol{\Phi}_\text{sub-6}^n\left(.\right), n=1,2,...,|\mathcal{F}|$, exist. Further, we know from Proposition \ref{prop1} that the existence of these mapping functions for any given environment requires only that the mapping from the candidate set of positions to the sub-6GHz channels is bijective -- a condition that is achievable with high probability as we discussed earlier in \sref{subsec:Map_beam}. 

\textbf{Proposed Deep Learning Based System Operation:} Consider the dual-band system model in \sref{sec:System}. The proposed system operation that exploits deep learning to predict the optimal mmWave beam directly from the sub-6GHz channels operates in the following two phases: 
\begin{itemize}
\item \textbf{Deep Learning Training Phase:} In this phase, the dual-band communication system operates as if there is no machine learning: For every coherence time, the uplink sub-6GHz channel is estimated requiring only one uplink pilot, and an exhaustive beam training is done for the mmWave downlink to calculate the achievable rate using every beamforming vector.  Let  $\mathcal{R}^u=\left\{R(\bh_\text{mmW}^u, \bff_1), ...., R(\bh_\text{mmW}^u, \bff_{\left|\mathcal{F}\right|})\right\}$ denote the set of achievable rates at user $u$ for all the codebook beams. Then, at every coherence time, one new data point $\left(\bh_\text{sub-6}^u,\mathcal{R}^u\right)$ is added to the deep learning dataset. After collecting large number of data points, we use this dataset to train the deep learning model, which will be described in detail in \sref{sec:DL_model}. 
\item \textbf{Deep Learning Deployment Phase:} Once the deep learning model is trained, the base station uses it to directly predict the optimal mmWave beam using the sub-6GHz channels. More specifically, this phase requires the user to send only one uplink pilot to estimate the sub-6GHz channels and this channel is passed to the deep learning model which predicts which mmWave beam should be used for the downlink mmWave data transmission. This saves all the training overhead associated with the mmWave exhaustive beam training process.
\end{itemize} 

It is important to note here that \textbf{the proposed deep learning based system operation has almost no learning overhead in terms of the system time-frequency resources}. That is because the mmWave beam training will typically be performed anyway, in systems that do not use machine learning, to figure out the best beamforming direction. This means that the dataset collection process and the deep learning training are done without affecting the classical mmWave system operation. Hence, even if a large dataset needs to be collected to capture the dynamics in the environment, that is feasible because it does not interfere with the classical system operation.

\textbf{Practical Challenges:}
As shown in this section, for any given static environment, once the mapping from the candidate positions to the sub-6GHz channels is bijective (one-to-one), the sub-6GHz channels can be exploited to directly predict the optimal mmWave beams with a very high success probability. In practice, however, there are a few factors that can add some probabilistic error to this beam prediction such as the measurement noise, the phase noise, and the dynamic scatterers in the environment. These factors can make the position-to-channel mapping not perfectly bijective or create sub-6GHz channels that are different than those experienced before by the neural networks.  In \sref{sec:Results}, we will evaluate the impact of some of these practical considerations on the beam prediction performance.

\section{Predicting mmWave Blockages using Sub-6GHz Channels} \label{sec:blockage}

The reliability of the communication links is one of the main challenges for mmWave systems. This is mainly because of the sensitivity of mmWave signals to blockages, which can result in a sudden drop in the SNR if the line-of-sight (LOS) path is obstructed. With this motivation, \cite{Alkhateeb2018a} proposed to leverage machine learning tools to learn the mobility patterns of the transmitters, receiver, or scatterers, and hence predict  blockages before they actually block the LOS path. This can enable the network to act proactively, for example by handing over the communication session to another base station, before the session is disconnected. In this paper, we focus on a different but equally important problem which is the ability of the dual-band base stations to use the sub-6GHz channels to decide whether or not the mmWave LOS link is blocked. This knowledge can potentially help the BS in adapting its transmission strategy accordingly by, for example, changing the transmit power and modulation/coding scheme or handing off the communication session to the sub-6GHz band. In \sref{subsec:map_blockage}, we will investigate the conditions under which the sub-6GHz can indicate the LOS blockage/no-blockage status. Then, we show in \sref{subsec:DL_blockage} that this capability can be implemented using deep neural networks.

\subsection{Mapping Sub-6GHz Channels to Link Blockages} \label{subsec:map_blockage}

Consider the system model in \sref{sec:System} with co-located sub-6GHz and mmWave arrays at the base station. Let $\mathcal{X}=\{x_u\}$ represent the set of candidate user locations. To simplify the analysis in this section, we make the following assumption
\begin{assumption}
	For all the users in $\mathcal{X}$, if a blockage obstructs the LOS path to the mmWave array, it also obstructs the LOS path to the sub-6GHz array.
\end{assumption}
Note that this assumption is typically satisfied in practice since the sub-6GHz and mmWave arrays are co-located. It is also worth mentioning here that while obstructing the mmWave LOS link may completely block the link (due to the high penetration loss at mmWave), the obstruction of the sub-6GHz LOS ray will likely only reduce its power without a complete blockage. Our analysis, however, is general and independent of whether the LOS rays are completely of partially blocked.  Now, define $s_u \in \mathcal{S}=\{0, 1\}$ as the blockage status of user $u$, with $s_u=0$ and $s_u=1$ indicating that the LOS path between user $u$ and the BS is, respectively, unblocked or blocked. For a given environment, let $\bh_\rm{sub-6, B}^{u}$ denote the sub-6GHz channel of user $u$ when the LOS path is obstructed/blocked and $\bh_\rm{sub-6, UB}^{u}$ denote the channel when the LOS path is not blocked. Further, let  $\mathcal{H}_\mathrm{B}=\{\bh_\rm{sub-6, B}^{u}\}$ and $\mathcal{H}_\mathrm{UB}=\{\bh_\rm{sub-6, UB}^{u}\}$ represent the blocked and unblocked channel sets. Next, we define the mapping function $\Psi$ that maps the user position and blockage status to a sub-6GHz channel. 
\begin{equation}
\boldsymbol{\Psi}: \mathcal{X} \times \mathcal{S} \rightarrow \mathcal{H},
\end{equation} 
where $\mathcal{X} \times \mathcal{S}$ is the Cartesian product of the user position and blockage status sets, and $\mathcal{H}$ represent the set of all blocked and unblocked channels, i.e., $\mathcal{H}=\mathcal{H}_\mathrm{B} \cup  \mathcal{H}_\mathrm{UB}$. In the following proposition, we state the condition under which the LOS blockage can be identified using the sub-6GHz channels. 
\begin{proposition} \label{prop_blockage}
For any given environment, if the mapping function $\boldsymbol{\Psi}$ is bijective, then there exists a continuous discriminant function $f: \mathcal{H} \rightarrow {0,1}$ such that 
\begin{equation}
\forall \bh \in \mathcal{H}_\rm{B}, f(\bh)=1, \ \   \ \ \forall \bh \in \mathcal{H}_\rm{UB}, f(\bh)=0.
\end{equation} 
\end{proposition}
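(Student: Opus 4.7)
The plan is to address the two claims of the proposition in sequence: first, use the bijectivity of $\boldsymbol{\Psi}$ to establish that the blocked channel set $\mathcal{H}_\mathrm{B}$ and the unblocked set $\mathcal{H}_\mathrm{UB}$ are disjoint; second, define $f$ as the indicator function of $\mathcal{H}_\mathrm{B}$ on $\mathcal{H}$; and third, verify the continuity of $f$ by a separation argument. For the first step, I would proceed by contradiction: if some $\bh$ were to lie in $\mathcal{H}_\mathrm{B} \cap \mathcal{H}_\mathrm{UB}$, then there would exist users $u, u' \in \mathcal{X}$ with $\boldsymbol{\Psi}(x_u, 1) = \bh = \boldsymbol{\Psi}(x_{u'}, 0)$. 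Since the pairs $(x_u, 1)$ and $(x_{u'}, 0)$ differ in their second coordinate, this would violate injectivity of $\boldsymbol{\Psi}$, so we must have $\mathcal{H}_\mathrm{B} \cap \mathcal{H}_\mathrm{UB} = \emptyset$, and the two sets then partition $\mathcal{H}$.

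With disjointness in hand, the natural candidate $f(\bh) = 1$ for $\bh \in \mathcal{H}_\mathrm{B}$ and $f(\bh) = 0$ for $\bh \in \mathcal{H}_\mathrm{UB}$ is well-defined on $\mathcal{H}$ and meets the pointwise requirement by construction. All non-trivial content then sits in the continuity claim, and I expect this to be the main obstacle, since continuity of a $\{0,1\}$-valued map demands that both preimages be open in $\mathcal{H}$ -- a topological rather than set-theoretic condition, which is not automatic from bijectivity alone.

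To handle continuity I would exploit the same compactness observation used in the proof of Proposition~\ref{prop2}: the set of physically realizable sub-6GHz channels lies in a closed and bounded, hence compact, subset of $\mathbb{C}^{M_\text{sub-6}}$, so $\mathcal{H}_\mathrm{B}$ and $\mathcal{H}_\mathrm{UB}$ may be taken to be compact as well. Two disjoint compact subsets of a metric space possess a strictly positive mutual distance $d = \mathrm{dist}(\mathcal{H}_\mathrm{B}, \mathcal{H}_\mathrm{UB}) > 0$, so around every $\bh \in \mathcal{H}$ the open ball of radius $d/2$ intersects only one of the two pieces. Hence $\mathcal{H}_\mathrm{B}$ and $\mathcal{H}_\mathrm{UB}$ are both clopen in the subspace topology on $\mathcal{H}$, and $f$ is continuous as a map into $\{0,1\}$ with the discrete topology. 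An equivalent and slightly cleaner route would be to invoke Urysohn's lemma on the normal space $\mathbb{C}^{M_\text{sub-6}}$ to produce a continuous $[0,1]$-valued function equal to $0$ on $\mathcal{H}_\mathrm{UB}$ and $1$ on $\mathcal{H}_\mathrm{B}$; its restriction to $\mathcal{H}$ attains only the values $0$ and $1$, and therefore furnishes the required discriminant $f$.
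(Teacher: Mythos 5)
Your proposal is correct and follows the same skeleton as the paper's proof: bijectivity of $\boldsymbol{\Psi}$ forces $\mathcal{H}_{\mathrm{B}} \cap \mathcal{H}_{\mathrm{UB}} = \emptyset$ (your contradiction argument via the differing second coordinates $(x_u,1)$ vs.\ $(x_{u'},0)$ is exactly the content of the paper's one-line observation), and continuity is then obtained by a topological separation argument. Where you diverge slightly is in how that last step is discharged: the paper simply cites Urysohn's lemma for disjoint sets, whereas your primary route observes that the two channel sets are compact, hence at positive mutual distance, hence clopen in $\mathcal{H}$, which makes the indicator function itself continuous. This is arguably the more honest version of the argument, since Urysohn's lemma requires disjoint \emph{closed} sets in a normal space -- a hypothesis the paper never verifies -- and your compactness observation is precisely what supplies it; you then note the Urysohn route as an equivalent alternative, which is literally the paper's proof. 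The one soft spot in your write-up is the phrase that $\mathcal{H}_{\mathrm{B}}$ and $\mathcal{H}_{\mathrm{UB}}$ ``may be taken to be compact'' because they sit inside a compact set: a subset of a compact set need not be compact, so you should either note that $\mathcal{X}$ is a finite (or compact) set of candidate positions whose continuous image under $\boldsymbol{\Psi}(\cdot,s)$ is compact, or simply assume closedness as the paper implicitly does in its Proposition~\ref{prop2}. With that caveat stated, the two proofs are substantively the same.
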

\begin{proof}
	When the mapping $\boldsymbol{\Psi}$ is bijective, each $(\bx_u, s_u)$ tuple has a unique channel, which yields disjoint blocked and unblocked channel sets, i.e., $\mathcal{H}_\rm{B} \cap \mathcal{H}_\rm{UB}= \phi$. This leads to the existence of the continuous discriminant function $f(.)$ using the Urysohn Lemma \cite{Kotze1992}.
\end{proof}

Note that the bijectiveness condition of the mapping function $\boldsymbol{\Psi}$ means that (i) every user position in $\{\bx_u\}$ will yield two different channels for the LOS-obstructed or unobstructed cases and that (ii) these  LOS-obstructed/unobstructed channels are different for all the users in $\{\bx_u\}$ . Similar to Assumption \ref{assume1}, the bijectiveness condition of $\boldsymbol{\Psi}$ is expected to be satisfied with high probability in multi-antenna systems, as will be shown in \sref{subsec:block_results}.

\subsection{Deep Learning Based Blockage Prediction} \label{subsec:DL_blockage}

Using sub-6GHz channel knowledge, Proposition \ref{prop_blockage} proves that it is possible to decide whether the LOS link between the base station and user is blocked or not under some conditions. The discriminating function that does this, however, is hard to be characterized analytically and may be highly non-linear given the nature of the complex channel vectors. Intuitively, deciding whether the LOS link is blocked or not requires some spatial and power analysis of the rays that construct the channels which is a non-trivial task. Motivated by these challenges, we propose to leverage the powerful learning capabilities of deep neural networks to learn this LOS  blockage discriminating function. This is addressed by the following proposition.

\begin{proposition} \label{prop:block_DL}
		Let $\boldsymbol{\Pi}^n_{N}(.), n=1,2,...,|\mathcal{F}|$ represent the output of a dense neural network that consists of a single hidden layer of $N$ nodes. Further, define the predicted blockage status of using this network as $\hat{s}_{N}$. If the conditions in Proposition \ref{prop_blockage} are satisfied, then for any $\epsilon>0$, there exists s positive constant $N$ large enough such as 
		\begin{equation}
		\kappa_2=\mathbb{P}\left(\hat{s}_N=s\left|\bh_\emph{\text{sub-6}}\right.\right) > 1- \epsilon.
		\end{equation}
\end{proposition}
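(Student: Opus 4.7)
The plan is to adapt the structure of the proof of Corollary~\ref{cor1} to the binary-classification setting of blockage detection, replacing the argmax decision rule over $|\mathcal{F}|$ candidate beams by a simple threshold on a scalar network output. First, I would invoke Proposition~\ref{prop_blockage} to obtain a continuous discriminant function $f:\mathcal{H}\to\{0,1\}$ that agrees with the true blockage status, i.e.\ $f(\bh)=1$ for every $\bh\in\mathcal{H}_{\mathrm{B}}$ and $f(\bh)=0$ for every $\bh\in\mathcal{H}_{\mathrm{UB}}$. Since the set of candidate sub-6GHz channels $\{\bh_\text{sub-6}^u\}$ is closed and bounded, hence compact (the same observation used to justify Proposition~\ref{prop2}), the universal approximation theorem applies on $\mathcal{H}=\mathcal{H}_{\mathrm{B}}\cup\mathcal{H}_{\mathrm{UB}}$.

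Next, I would invoke Hornik's theorem \cite[Thm.~2.2]{Hornik1989} exactly as in Proposition~\ref{prop2}: for any accuracy $\delta>0$ there exists a positive integer $N$ and a weight configuration $\Omega$ such that the single-hidden-layer network $\boldsymbol{\Pi}_N(\cdot,\Omega)$ satisfies
\begin{equation}
\sup_{\bh\in\mathcal{H}}\,\bigl|\boldsymbol{\Pi}_N(\bh,\Omega)-f(\bh)\bigr|<\delta.
\end{equation}
I would then realize the predicted blockage status by thresholding the scalar output, $\hat{s}_N=\mathbbm{1}\{\boldsymbol{\Pi}_N(\bh,\Omega)>\tfrac12\}$ (equivalently, an argmax over two output units indexed by $n\in\{1,2\}$, in keeping with the $\boldsymbol{\Pi}_N^n$ notation in the statement). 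Choosing any $\delta<\tfrac12$ (for instance $\delta=\tfrac14$), the uniform bound above forces the network output to lie within the same half of $[0,1]$ as $f(\bh)$, so that $\hat{s}_N(\bh)=f(\bh)=s$ for every $\bh\in\mathcal{H}$. Consequently the conditional success probability $\kappa_2=\mathbb{P}(\hat{s}_N=s\mid\bh_\text{sub-6})$ equals $1$, which trivially exceeds $1-\epsilon$ for every $\epsilon>0$.

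The main obstacle is reconciling a small mismatch in Proposition~\ref{prop_blockage}: a continuous function on a connected domain cannot take only the values $\{0,1\}$ unless it is constant, so the discriminant $f$ must be interpreted as the continuous $[0,1]$-valued function produced by the Urysohn lemma that equals $0$ on the closed set $\mathcal{H}_{\mathrm{UB}}$ and $1$ on the closed set $\mathcal{H}_{\mathrm{B}}$. This subtlety is resolved by the bijectivity of $\boldsymbol{\Psi}$ assumed in Proposition~\ref{prop_blockage}, which guarantees that $\mathcal{H}_{\mathrm{B}}\cap\mathcal{H}_{\mathrm{UB}}=\emptyset$; the two subsets can then be separated by a continuous function, and only its values on $\mathcal{H}$ matter for the thresholding rule. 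A secondary, but routine, point is to confirm that the activation function used in $\boldsymbol{\Pi}_N$ falls within the hypothesis class of Hornik's theorem (any non-constant bounded sigmoidal suffices), which is identical to the condition already imposed in Proposition~\ref{prop2}, so no new groundwork is required.
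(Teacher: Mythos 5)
Your proposal is correct. The paper itself omits this proof, stating only that it is ``similar to that in Corollary \ref{cor1},'' so your write-up is a legitimate filling-in of that gap, and it follows the intended template: uniform approximation of the target mapping by a single-hidden-layer network (Proposition \ref{prop2} / Hornik), followed by a separation argument guaranteeing the decision rule picks the right class. Where you genuinely improve on a literal transplant of the Corollary \ref{cor1} argument is in exploiting the structure of the binary case: the discriminant $f$ takes the values $0$ on $\mathcal{H}_{\mathrm{UB}}$ and $1$ on $\mathcal{H}_{\mathrm{B}}$, so the ``gap'' that Assumption \ref{assume_CB} must supply in the beam-prediction setting is here exactly $1$, and any uniform accuracy $\delta<\tfrac12$ with a threshold at $\tfrac12$ yields $\hat{s}_N=s$ on all of $\mathcal{H}$, i.e.\ $\kappa_2=1$ rather than merely $\kappa_2>1-\epsilon$. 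This is cleaner than re-deriving the chain of probability inequalities in Appendix \ref{app1}, and it avoids having to define an analogue of $\bar{\epsilon}$. Your side remark about Proposition \ref{prop_blockage} is also well taken: the Urysohn lemma produces a continuous $[0,1]$-valued function that is $0$ on one closed set and $1$ on the other (a continuous $\{0,1\}$-valued function on a connected domain would be constant), and only its restriction to $\mathcal{H}=\mathcal{H}_{\mathrm{B}}\cup\mathcal{H}_{\mathrm{UB}}$ matters for the thresholding rule, so the argument goes through once $\mathcal{H}_{\mathrm{B}}\cap\mathcal{H}_{\mathrm{UB}}=\emptyset$ is ensured by the bijectivity of $\boldsymbol{\Psi}$.
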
	
\begin{proof}
	The proof is similar to that in Corollary \ref{cor1} and is omitted due to space limitation.  
\end{proof}

\textbf{Practical Challenges:} Proposition \ref{prop:block_DL} highlights the interesting ability of neural networks in classifying the sub-6GHz channel to LOS blocked or unblocked classes. One important challenge in this application, however, is obtaining the ground-truth blocked/unblocked labels. Therefore, it is important to develop practical labeling techniques that construct the required labels for training the neural networks. In \sref{subsec:block_results}, we propose a labeling strategy based on analyzing the mmWave beam training results and evaluate its performance compared to the case when the ground-truth labels are available.

\section{Deep Learning Model} \label{sec:DL_model}

In Sections \ref{subsec:DL_beam}  and \ref{subsec:DL_blockage}, we proved theoretically how neural networks can enable the prediction of the mmWave beams and blockages using sub-6GHz channels. In this section, we will describe our specific design of the neural network architecture and the adopted learning model. Before we delve into the description of the proposed model, it is important to note that the two tasks we consider in this paper, namely predicting the optimal mmWave beam and predicting the link blockage status, involve a selection from a pre-defined set of options--a beam codebook or a binary set of blocked/unblocked status. These problems have then a striking similarity with the well-known classification problem in machine learning \cite{PRML_Bishop}. Specifically, in the beam prediction problem, each sub-6GHz channel is mapped to one of $D=\left|\mathcal{F}\right|$ indices, where $D$ is the size of discrete set of options. This could be viewed as a classification problem where each beam index represents a class, and the job of the learning model is to learn how to \textit{classify} channels into beam indices. In the recent years, deep-architecture neural networks have performed exceedingly well in handling classification problems \cite{AlexNet}\cite{VGG}\cite{ResNet}, among other things. Motivated by these results and by the conclusions of Sections \ref{subsec:DL_beam}  and \ref{subsec:DL_blockage}, we design a deep neural network model to address the mmWave beam and blockage prediction problems.

\subsection{Deep Neural Network Design}  \label{subsec:DNN}
The first step in designing a neural network is the choice of the network type, which should be based on the nature of the problem and the desired role of the model. For our beam/blockage prediction problems, the objective is to learn how to map the sub-6GHz channel vectors to a real-valued $D$-dimensional vector $\bp$, where $D$ is either the codebook size, $\left|\mathcal{F}\right|$, or 2 for the blockage status. For this objective, and motivated by the universal approximation results in Sections \ref{subsec:DL_beam}  and \ref{subsec:DL_blockage}, we adopt a Multi-Layer Perceptron (MLP) network, which comprises a sequence of non-linear vector transformations \cite{NN_Haykin}. The proposed network architecture has two main sections, namely the base network and the task-specific layer. 

 \textbf{Base Network:} The beam and blockage predictions are both posed as classification problems and both share the same input data (Sub-6GHz channels). Therefore, to reduce the computational burden of the training process, we propose to have a common neural network architecture for the two problems, which, as will be shown shortly, enables leveraging \textit{transfer learning} to reduce the training overhead. Based on that, a single \textit{base} deep neural network is designed for the two prediction problems. This network comprises $L_\mathrm{NN}$ stacks of layers, each of which has a sequence of fully-connected with ReLU non-linearity and dropout layers, as illustrated in Figure \ref{Arch}. All fully-connected layers have the same breadth, $M_\mathrm{NN}$ neurons per layer. 
	
	\begin{figure*}[t]
		\includegraphics[width=\linewidth]{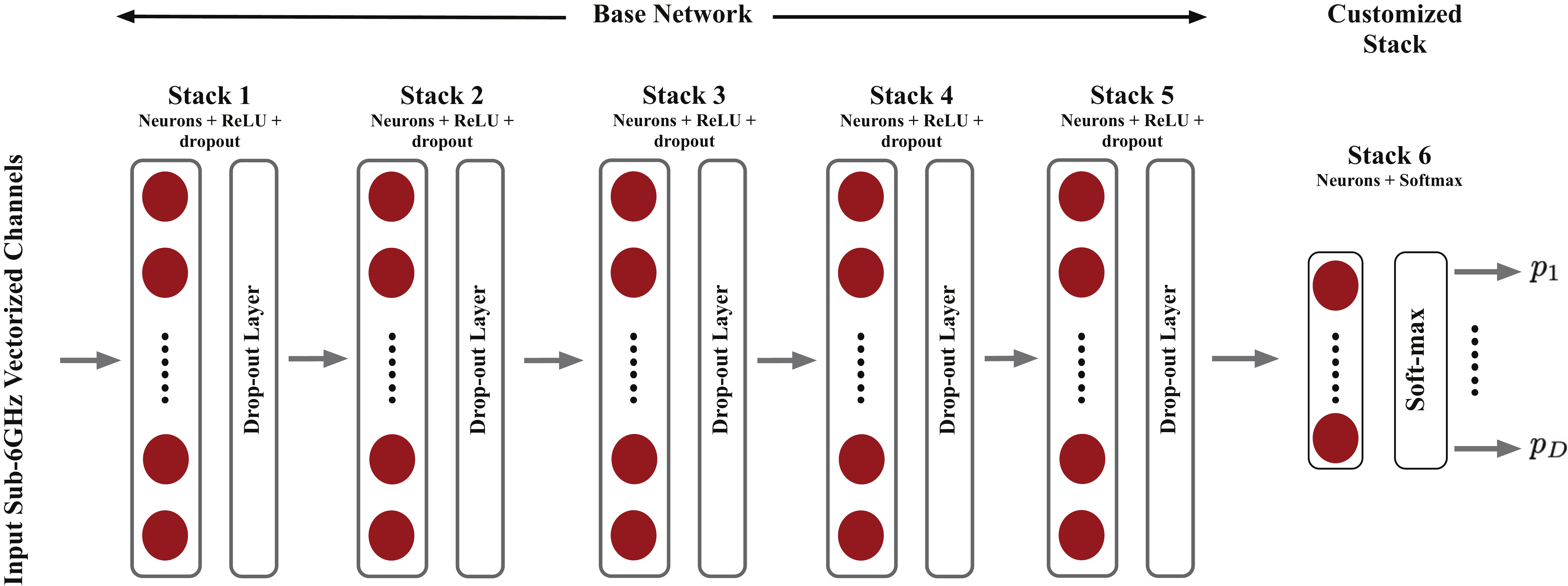}
		\caption{The overall deep neural network architecture. The first $L_{NN}$ stacks comprising multiple fully-connected, ReLU, and dropout layers form the base network. The final stack represents the customizable output layers. For both problems, it comprises a fully-connected layer followed by a soft-max. Their size depends of the number of classes in each task.}
		\label{Arch}
	\end{figure*}

\textbf{Task-Specific Output Layer:} The number of outputs in each prediction task (beam or blockage) differs as the number of classes changes; predicting a beam index means that there are $D = \left|\mathcal{F}\right|$ beam choices, while predicting blockage is a binary problem with $D = 2$ choices, \textit{blocked} or \textit{unblocked}. Hence, the base network is customized with an additional stack of layers that depends on the target task. For beam prediction, the final layer is designed to have a fully-connected layer with $D = \left|\mathcal{F}\right|$ neurons. It acts as a linear classifier that projects its $M_\mathrm{NN}$-dimensional input feature vector onto a $D$-dimensional classification space. The projection is fed to a \textit{Softmax} layer, which induces a probability distribution over all the available classes. Formally, it does so by computing the following formula for every element $d$ in its input vector:
	\begin{equation}
	p_d= \frac{e^{z_d}}{\sum_{i = 1}^{D} e^{z_i}},
	\end{equation}
	where $z_i, i=1, ..., D$ is the $i$th element of the $D$-dimensional projection vector (input to the softmax), and $p_d$ is the probability that the $d$th beamforming vector is the correct prediction--more on Softmax could be found in \cite{DLBook}. Finally, the index of the element with the highest probability is the index of the predicted beam-forming vector. For the blockage prediction task, a similar last stack is designed but with different dimensions. The classifier has $D = 2$ neurons, and the Softmax here produces two probabilities, namely blocked ($p_1$) and unblocked ($p_2$).
	
 \textbf{Transfer Learning:} An interesting and advantageous characteristic of deep neural networks is their ability to exploit a learned function on a certain input data to perform another function on the same input data, which is referred to as \textit{transfer learning}. In \cite{TransfLearn}, it has been empirically shown that layers closer to the input learn generic features, i.e., those layers tend to learn the same mapping regardless of the task and final outputs of the neural network. However, as layers get farther away from the input and deeper into the network, features become more specific, i.e., they are more groomed to the task in question. Such empirical evidence suggests that reusing a trained network for a different task could provide an interesting boost in the network performance and help reduce the computational complexity associated with its training \cite{TransfLearn}.  
 
Now, given that both beam/blockage prediction problems could be faced by the same mmWave system, a resourceful way for good prediction performance in both cases is to apply transfer learning. As it will be discussed in Section \ref{sec:Results}, beam prediction is a more challenging problem than blockage prediction. This is mainly, but not exclusively, due to the large number of classes beam prediction has. Hence, the proposed training strategy in this work focuses on first training and testing the deep neural network for beam prediction. Once that cycle is done, the last stack of the trained network is replaced with that suitable to blockage prediction. Then, it undergoes another training and testing cycle (called fine-tuning) for the blockage prediction task. This offers faster convergence and improved performance compared to training from scratch for the blockage prediction task.
	
\subsection{Learning Model} \label{subsec:learn}
Our objective is to leverage the neural network architecture described in \sref{subsec:DNN} to learn how to predict mmWave beams and blockages directly from the sub-6GHz channels. To achieve that, we adopt a supervised learning model that operates in two modes, a background training mode and a deployment mode. Next, we explain the two modes. 

\textbf{1. Background Training Mode:} As described earlier in \sref{subsec:DL_beam}, the dual-band system operates as if there is no deep learning. It collects data points for the beam prediction dataset, $\left(\bh_\text{sub-6}^u,\mathcal{R}^u\right)$, and, if the blockage status knowledge is available, it collects data points for the blockage prediction dataset, $\left(\bh_\text{sub-6}^u, s_u\right)$. We will discuss how to obtain the blockage labels shortly. Both datasets needs to undergo pre-processing before being used for model training \cite{Alrabeiah2019}:
\begin{itemize}
	\item \textbf{input normalization:} The sub-6GHz channels, which are the inputs to the neural network, are normalized using a global normalization factor. Let $\Delta=\max_{\forall u, \forall i, \forall k} \left|\left[\bh_\text{sub-6,k}^u\right]_{i}\right|$ denote the global normalization factor where $\left[\bh_\text{sub-6,k}^u\right]_i$ is the $i$th element in the sub-6GHz channel vector of the $k$th subcarrier of user $u$. Then the sub-6GHz channels in the dataset $\left\{\bh_\text{sub-6,k}^u\right\}$ are all normalized by $\Delta$ to have a maximum absolute value of $1$. Every normalized channel is decomposed into real and imaginary vectors that are stacked together to form a real-valued vector. Finally, all real-valued vectors of the $K$ sub-6GHz subcarriers of a user $u$ are stacked together to form a ($2\times K \times M_{\text{sub-6}}$)-dimensional vector, the input to the neural network. Writing the complex channel vector as a real-valued vector of the stacked real, imaginary, and subcarriers is to enable the implementation of real-valued neural networks. 
	
	\item \textbf{Labels construction:} The labels are modeled as $D$-dimensional one-hot vectors\footnote{One-hot vector refers to a binary vector where all elements are zero except for a single element with the value of one.} indicating the class labels. For the beam prediction dataset, the one-hot vector for every sub-6GHz channel has 1 at the element that corresponds to the index of the optimal beamforming vector (which is calculated from \eqref{eq:opt_f}). For the blockage prediction task, the one-hot vectors are 2-dimensional with $[1, 0]$ for blocked and $[0, 1]$ for unblocked. In \sref{subsec:block_results}, we study the learning performance in two situations: (i) when the ground-truth blockage status is available and (ii) when the blockage status is estimated based on the angular distribution of receive power. 	
\end{itemize}

After preparing the dataset, the neural network model is trained to minimize the cross-entropy loss function, $L_\text{cross}$ defined as
\begin{equation}
L_\text{cross} = - \sum_{d=1}^D t_d \log_2(p_d),
\end{equation}
where $\bt=[t_1, ..., t_D]$ is the target one-hot vector and $\bp=[p1, ..., p_D]$ is the network prediction. It is important to mention here that $p_d$ represents the neural network predicted probability that the input sub-6GHz channel belongs to the $d$th class. 

\textbf{2. Deployment Mode:} Once the neural network model is trained, it is then used to predict the mmWave beams and blockage status directly from the knowledge of the sub-6GHz channels. Please refer to \sref{subsec:DL_beam} for more details.

\section{Experimental Results} \label{sec:Results}
In this section, we evaluate the performance of the proposed mmWave beam and blockage prediction solutions using numerical simulations. First, we describe the adopted evaluation scenarios in \sref{subsec:scenarios}. Then, we explain the construction of the deep learning dataset and neural network training process in Sections \ref{subsec:dataset} and \ref{subsec:train}. Finally, we show and discuss the performance results of the sub-6GHz based mmWave beam and blockage prediction solutions in Sections \ref{subsec:PerfBeam} and \ref{subsec:block_results}.

\subsection{Evaluation Scenarios} \label{subsec:scenarios}
Two publicly available evaluation scenarios from the DeepMIMO dataset \cite{DeepMIMO} are considered in the simulations. These scenarios are constructed using the 3D ray-tracing software Wireless InSite \cite{Remcom}, which captures the channel dependence on the frequency. In the beam prediction evaluation, we consider the outdoor scenario 'O1' that is available at two frequencies: `O1\textunderscore28' at 28GHz and `O1\textunderscore3p5' at 3.5GHz. For this scenario, we adopt a single base station (BS 3) and equip it with two co-located uniform linear arrays (ULAs) at 28GHz and 3.5GHz. In the blockage prediction evaluation, we consider the blockage scenario, named `O1\textunderscore28B' at 28GHz and `O1\textunderscore283p5B' at 3.5GHz. This scenario is identical to the first scenario but with a 6-meter tall blockage right in front of the base station (BS 3) and two possible reflecting surfaces on both sides of the BS, as depicted in \figref{fig:blocked}. This emulates a scenario where a large track, for example, comes between the users and the BS, and a couple of parked cars provide secondary signal paths to the BS. Note here that not all users in this scenario are blocked; those in the marked area of Figure \ref{fig:blocked} are considered blocked, as will be discussed in \sref{subsec:block_results}. 

\begin{figure}[t]
	\centering
	\includegraphics[width=.55\linewidth]{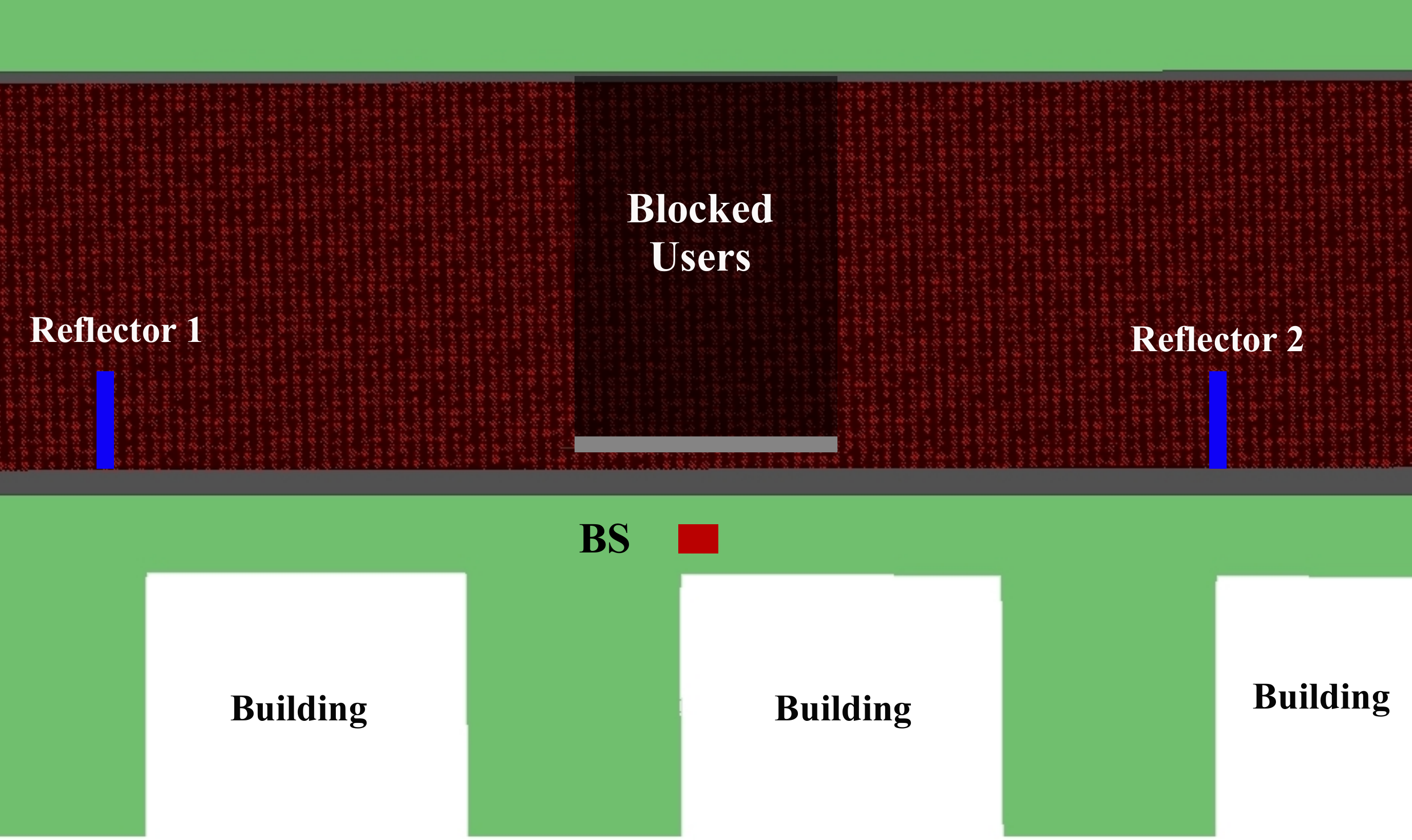}
	\caption{A top-view of the second scenario. It shows the BS location, the blockage, and two possible reflectors. The shaded area above the blockage marks the blocked users, and they are the only ones considered to construct the dataset.}
	\label{fig:blocked}
\end{figure}

\subsection{Dataset Generation} \label{subsec:dataset}

Given the two ray-tracing scenarios described in \sref{subsec:scenarios}, we construct the following two datasets for the beam and blockage prediction problems.
\begin{itemize}
	\item \textbf{Beam prediction dataset:} Here, we adopt the LOS scenario (`O1\textunderscore28' and `O1\textunderscore3p5') and use the DeepMIMO generator script \cite{DeepMIMO} with the parameters described in Table \ref{table:ScenPara}. This DeepMIMO script generates the sub-6GHz and mmWave channel sets $\{\bh^u_\text{sub-6}\}, \{\bh^u_\text{mmW}\}$, between the base station and every user $u$ in the scenario. Given these channels we construct the beam prediction dataset explained in \sref{subsec:learn}. Essentially, every data point in this dataset has the sub-6GHz channel and the corresponding one-hot vector that indicates the index of the optimal mmWave beam in the codebook $\mathcal{F}$. It is important to mention here that we adopt a simple quantized beam steering codebook where the $n$th beam$, n=1, 2, ..., \left|\mathcal{F}\right|$ is defined as $\bff_n=\ba(\frac{2 \pi n}{\left|\mathcal{F}\right|})$, with $\ba(.)$ representing the mmWave array response vector.     
	
	\item \textbf{Blockage prediction dataset:} This dataset considers the blockage scenario (`O1\textunderscore28B' and `O1\textunderscore3p5B') along with the LOS scenario (`O1\textunderscore28' and `O1\textunderscore3p5') and use the DeepMIMO generator script with the parameters in Table \ref{table:ScenPara}. The DeepMIMO script will generate the blocked sub-6GHz and mmWave channel sets with which half of the blockage dataset is constructed as described in \sref{subsec:learn} adopting only the users falling in the marked region of Figure \ref{fig:blocked}. The other half of the dataset is obtained from the LOS scenario; the users in the same marked region of Figure \ref{fig:blocked} but in the LOS scenario are selected for the blockage dataset. Each data point in the dataset consists of the sub-6GHz channel and the corresponding one-hot vector that indicates whether the LOS ray is obstructed (blocked) or not. 	
\end{itemize} 

\begin{table*}[h]
	\caption{DeepMIMO Dataset Parameters}
	\centering
	\begin{tabular}{c | c c | c c}
		\hline\hline
		\textbf{Parameters} & \multicolumn{2}{c}{\textbf{LOS}} & \multicolumn{2}{|c}{\textbf{Blockage}} \\
		\hline
		Scenario name & O1\textunderscore28 & O1\textunderscore3p5 & O1\textunderscore28B & O1\textunderscore3p5B \\
		Active BS & 3 & 3 & 3 & 3 \\
		Active users & 700-1300 & 700-1300 & 700-1300 & 700-1300 \\
		Number of BS Antennas & 64 & 4 & 64 & 4 \\
		Antenna spacing (wave-length) & 0.5 & 0.5 & 0.5 & 0.5 \\
		Bandwidth (GHz) & 0.5 & 0.02 & 0.5 & 0.02 \\
		Number of OFDM subcarriers & 512 & 32 & 512 & 32 \\
		OFDM sampling factor & 1 & 1 & 1 & 1 \\
		OFDM limit & 32 & 32 & 32 & 32 \\
		Number of paths & 5 & 15 & 5 & 15 \\
		\hline\hline
	\end{tabular}
	\label{table:ScenPara}	
\end{table*}

\subsection{Neural Network Training} \label{subsec:train}
In this paper, we adopt the deep neural network architecture, described in \sref{subsec:DNN}, with $L_\text{NN}=5$ stacks of layers and $M_\text{NN}=2048$ neurons per layer. This neural network is trained using the datasets, explained in \sref{subsec:dataset}, for the beam and blockage prediction tasks. The training, as well as testing, samples are first contaminated with noise depending on the target SNR. Then, the network is trained in one of two ways, from scratch or transfer learning. The training approach is different in the beam and blockage predictions tasks: (i) For the beam prediction problem, the neural network training follows the training from scratch approach, where the weights are initialized randomly. It uses an initial learning rate of $1\times10^{-1}$, which is dropped by a factor of $0.1$ after the $90^{th}$ epoch.  The other hyper-parameters are summarized in Table.\ref{TrainHyper}. (ii) For the blockage prediction problem, the  neural network is trained with transfer learning. The weights of the best-performing network trained for beam prediction are used to initialize those of the base model used for blockage prediction. The only part that is trained from scratch is the end-stack. This approach generally provides faster training convergence compared to the case wight random initial weights. All training and experiments are done in MATLAB using its Deep Learning toolbox running on a machine with an RTX 2080 Ti GPU. The code files are available online at \cite{main_code}.

\begin{table}[h]
	\caption{DNN training hyper-parameters}
	\centering
	\begin{tabular}{c | c | c}
		\hline\hline
		Parameter & Beam Prediction & Blockage Prediction \\
		\hline
		Solver & SGDM & SGDM \\
		\hline
		Learning rate & $1\times10^{-1}$ & $1\times10^{-1}$ \\
		\hline
		Momentum & 0.9 & 0.9 \\
		\hline
		Dropout percentage & 40\% & 40\% \\
		\hline
		$l_2$ Regularization & $1\times10^{-4}$ & $1\times10^{-4}$ \\
		\hline
		Max. number of epochs & 100 & 50 \\
		\hline
		Dataset size $|S_{band}|$ & $\approx 108\times10^3$ & $ \approx 47\times 10^3$ \\
		\hline
		Dataset split & 70\%-30\% & 70\%-30\% \\
		\hline\hline
	\end{tabular}
	\label{TrainHyper}	
\end{table}

\subsection{Performance Evaluation Metrics}
Given that the addressed beam/blockage prediction problems in this paper are formulated  as classification problems, we  adopt the \textit{Top-1} and \textit{Top-n} classification accuracies as the main performance metrics. The Top-1 accuracy, denoted $A_\text{Top-1}$,  is defined as the frequency at which the deep neural network correctly predicts the class of the input. Formally, it is written as
\begin{equation}\label{AccTop1}
A_\text{Top-1} = \frac{1}{N_\text{test}} \sum_{n=1}^{N_\text{test}} \mathds{1}_{\hat{d_n}=d_n^\star},
\end{equation}
where $\mathds{1}_{(.)}$ is the indicator function, and $\hat{d_n}, d_n^\star$  are the predicted and target classes of the $n$th test point. Further, owing to the fact that a classifying deep neural network produces a probability distribution over all  possible classes, it is interesting to study whether one of the top \textit{n} predictions is the correct class instead of only focusing on the Top-1 prediction. This is customarily quantified using the Top-n accuracy. It is defined as the frequency at which the neural network correctly predicts the class of the input within its top-n predictions. In terms of beam prediction, it means that we test whether the optimal mmWave beam is within the best $n$ predicted beam using the sub-6GHz channel. In addition to the Top-1 and Top-n accuracies, we also evaluate the performance of the proposed deep learning based model in terms of the achievable rates using the predicted mmWave beams.

\begin{figure}[t]
	\centering
	\subfigure[]{\includegraphics[width=.475\linewidth]{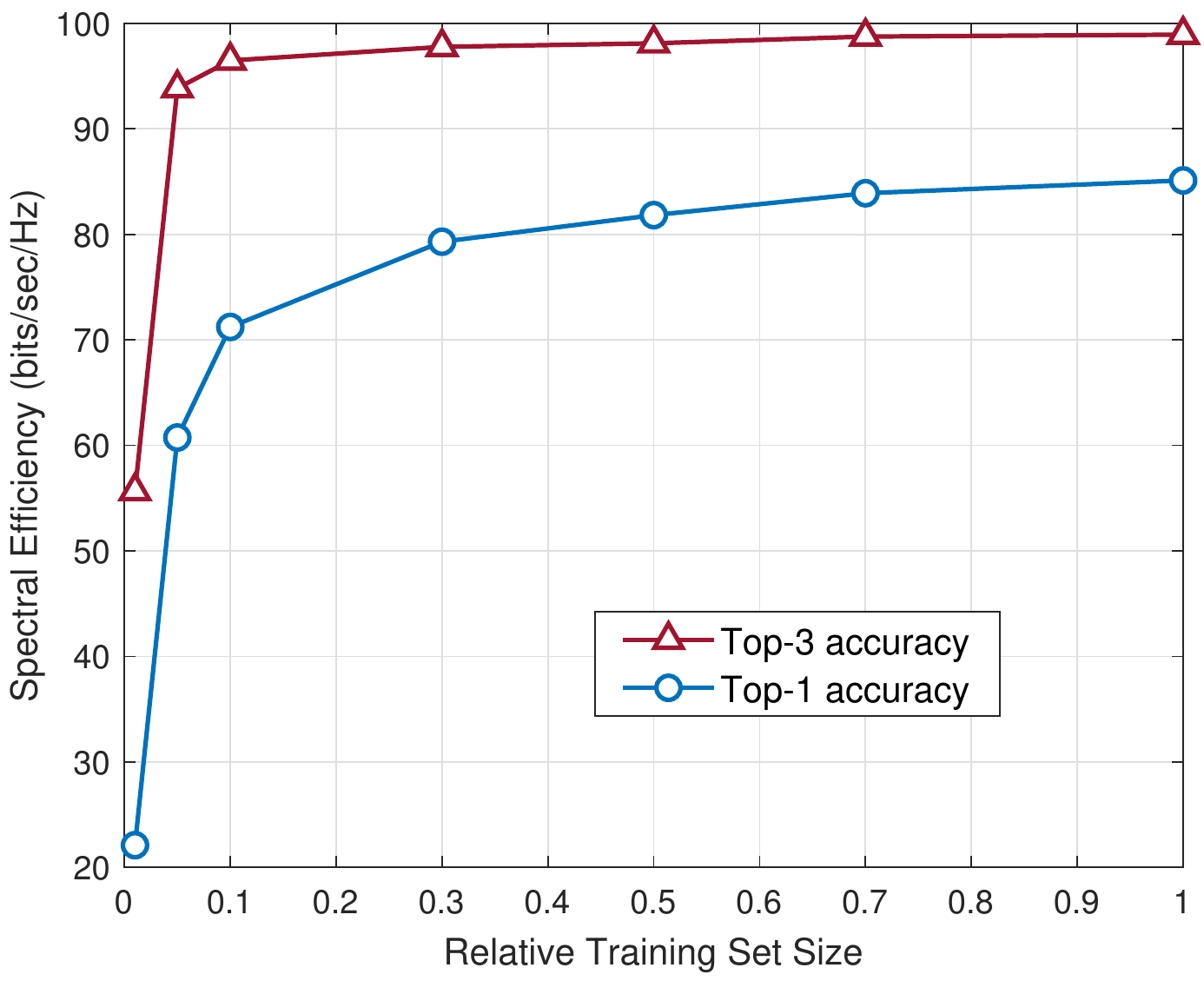}\label{fig:training}}
	\subfigure[]{ \includegraphics[width=0.475\linewidth]{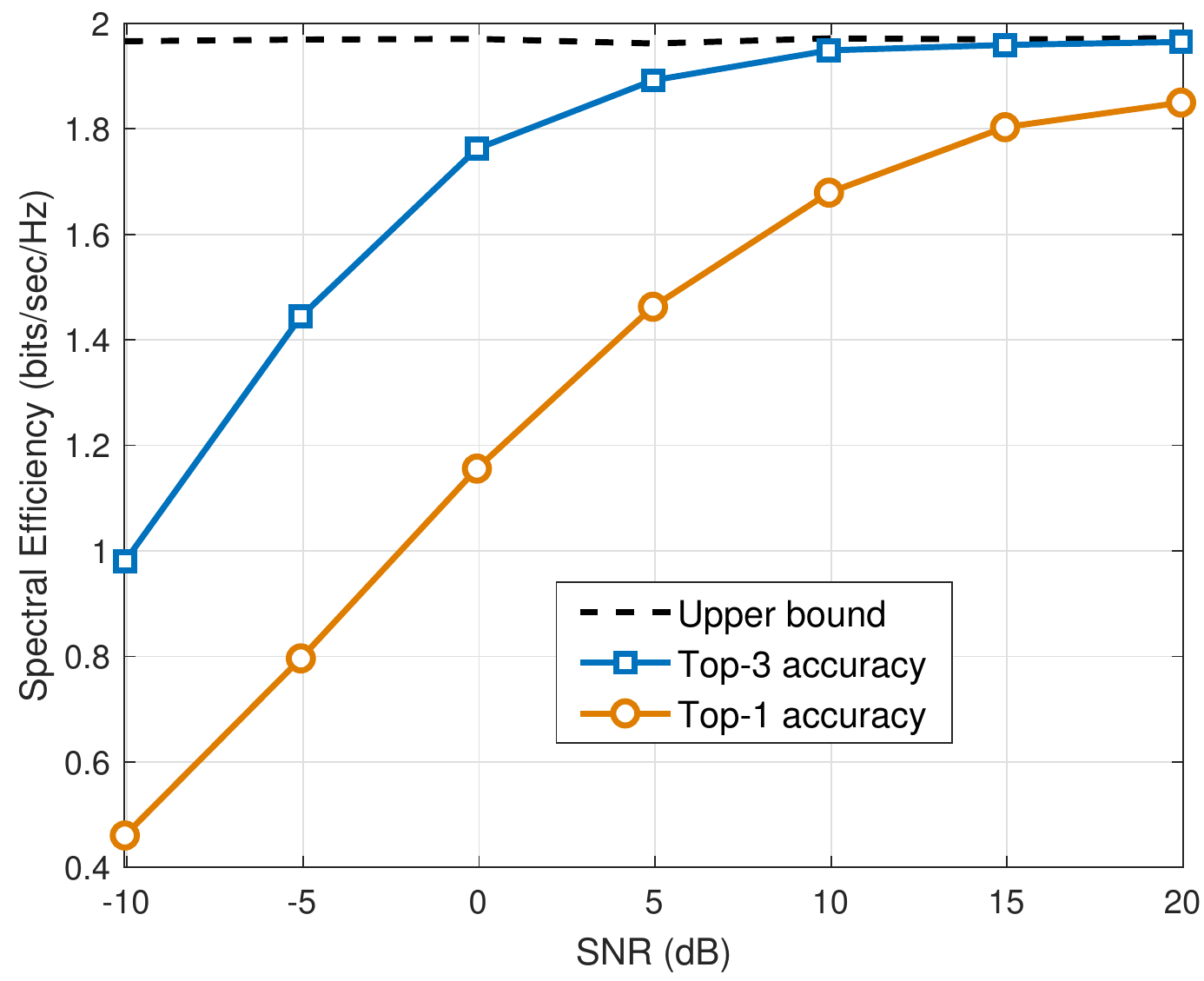}\label{perf_noise}}
	\caption{The effect of increasing the training set size on the beam-prediction performance is shown in (a), quantified by the top-1 and top-3 prediction accuracies. The values on the x-axis are relative to the total training set size, $\approx 76,000 $ data pairs. In (b), The performance of the deep learning solution is plotted when the sub-6GHz channels are contaminated with noise. The SNR represents the sub-6GHz receiver SNR.}
	
\end{figure}

\subsection{Beam Prediction Performance} \label{subsec:PerfBeam}
In this subsection, we investigate the performance of the proposed sub-6GHz based mmWave beam prediction approach. First, we will verify the basic claim that sub-6GHz channel can be directly used to predict the optimal mmWave beams with the aid of deep neural networks. Then, we will evaluate how this prediction performance is affected by the noisy sub-6GHz channel measurements and the mmWave array size. Finally, we will draw an interesting conclusion to the  question: Is it better to predict mmWave beams using sub-6GHz channels or mmWave channels? 

\textbf{Neural networks learn how to predict mmWave beams from sub-6GHz channels:} 
To validate Corollary \ref{cor1} and the capability of deep neural networks in predicting the optimal mmWave beams directly from sub-6GHz channels, we plot the top-1 and top-3 beam prediction accuracy in \figref{fig:training}. In this figure, we adopt the LOS scenario and dataset, described in Sections \ref{subsec:scenarios} and \ref{subsec:dataset} where the noisy channels measured at a 3.5GHz 4-element ULA is used to predict the optimal beam for a 28GHz 64-element array. The user is assumed to use $0$ dBm transmit power for the 3.5GHz uplink pilot (which represents a high-SNR regime of ~$25$dB for the adopted $20$MHz bandwidth and $5$dB noise figure). For this setup,  \figref{fig:training} plots the top-1 and top-3 accuracies versus different training set sizes. The x-axis values indicate the ratio of the training dataset samples that are actually used in training to the total number of training samples. First, \figref{fig:training} confirms the ability of neural networks in predicting the optimal mmWave beams directly from the sub-6GHz channels with high success probability that, for the adopted setup, approaches $85\%$ and $99\%$ for top-1 and top-3 accuracies, respectively. Further, the figure shows that 30\% of the total training subset size  is enough to get a beam prediction success probability $\kappa_1$ that is approximately 12\% off of the upper bound. These results validate the capability of deep neural networks in effectively predicting the mmWave beams using sub-6GHz channels.

\begin{table*}[h]
	\caption{Top-1 and 3 accuracies for sub-6GHz based mmWave beam prediction}
	\centering
	\begin{tabular}{c c c c c c c c}
		\hline\hline
		SNR (dB) & -10 & -5 & 0 & 5 & 10 & 15 & 20 \\
		\hline
		Top-1 & 13.3\% & 26\% & 41.1\% & 57.6\% & 70\% & 78.5\% & 83.1\% \\
		\hline
		Top-3 & 35.9\% & 60\% & 80.4\% & 92.8\% & 96.8\% & 98.3\% & 98.8\% \\
		\hline\hline
	\end{tabular}
	\label{Accuracies}	
\end{table*}

\textbf{Impact of noisy channel measurements at sub-6GHz:} In \figref{fig:training}, we considered a high SNR regime. Now, we want to evaluate the degradation in the mmWave beam prediction performance for different SNR regimes. Note that this SNR refers to the sub-6GHz receive SNR, i.e., how noisy the sub-6GHz channel measurements are. To do that, we considered the same setup of \figref{fig:training} while adding noise with different noise power values to the sub-6GHz channels. Essentially, we study the beam prediction performance for the range of -10dB to 20dB sub-6GHz SNR. For each SNR, the network is trained with the noisy subset of samples, and the Top-1 and Top-3 accuracies are measured on the noisy test subset. The prediction performance at this range is summarized in Table \ref{Accuracies}. As shown in this Table, the proposed deep learning model can clearly combat harsh noise situations. For example, the model can predict the optimal beam within its top-3 predictions with an accuracy close to 81\% at 0 dB SNR, and it is the top-1 prediction around 50\% of the time at the same SNR. This indicates that even in harsh conditions like that, a very little mmWave beam training could refine the network output and improve the performance, i.e., instead of sweeping across the whole codebook (64 beams in this case), the top-3 predictions are 81\% likely to have the best one among them.

To translate this into wireless communication terms, \figref{perf_noise} plots the mmWave achievable rates using the predicted beams for different values of \textit{sub-6GHz} SNR. Note that the upper bound is the same as we consider a fixed mmWave SNR; we are just changing the sub-6GHz SNR. At 0 dB SNR, the top-3 achievable rate is about 6\% shy of the upper bound, which is only achieved with full knowledge of the mmWave channels. The top-1 rate, on the other hand, is not quite as close as the top-3 to the upper bound. It is 39\% off of that bound, yet it is acceptable considering the low SNR. Around 15 dB is where that gap starts closing up, dropping a little less than 5\% for top-1. An important observation needs to be highlighted here. With the Top-1 accuracy at 0 dB a little above 41\% in Table \ref{Accuracies}, it may seem a bit surprising that the rate only drops 39\% from the upper bound. This implies that even when the DNN mis-classifies, it seems to select a beam that is not very far away from the correct one. Such claim is corroborated with the relatively high Top-3 accuracy. This could also be observed at 5 and 10 dB SNRs.

\begin{figure}[t]
	\centering
	\subfigure[]{\includegraphics[width=.47\linewidth]{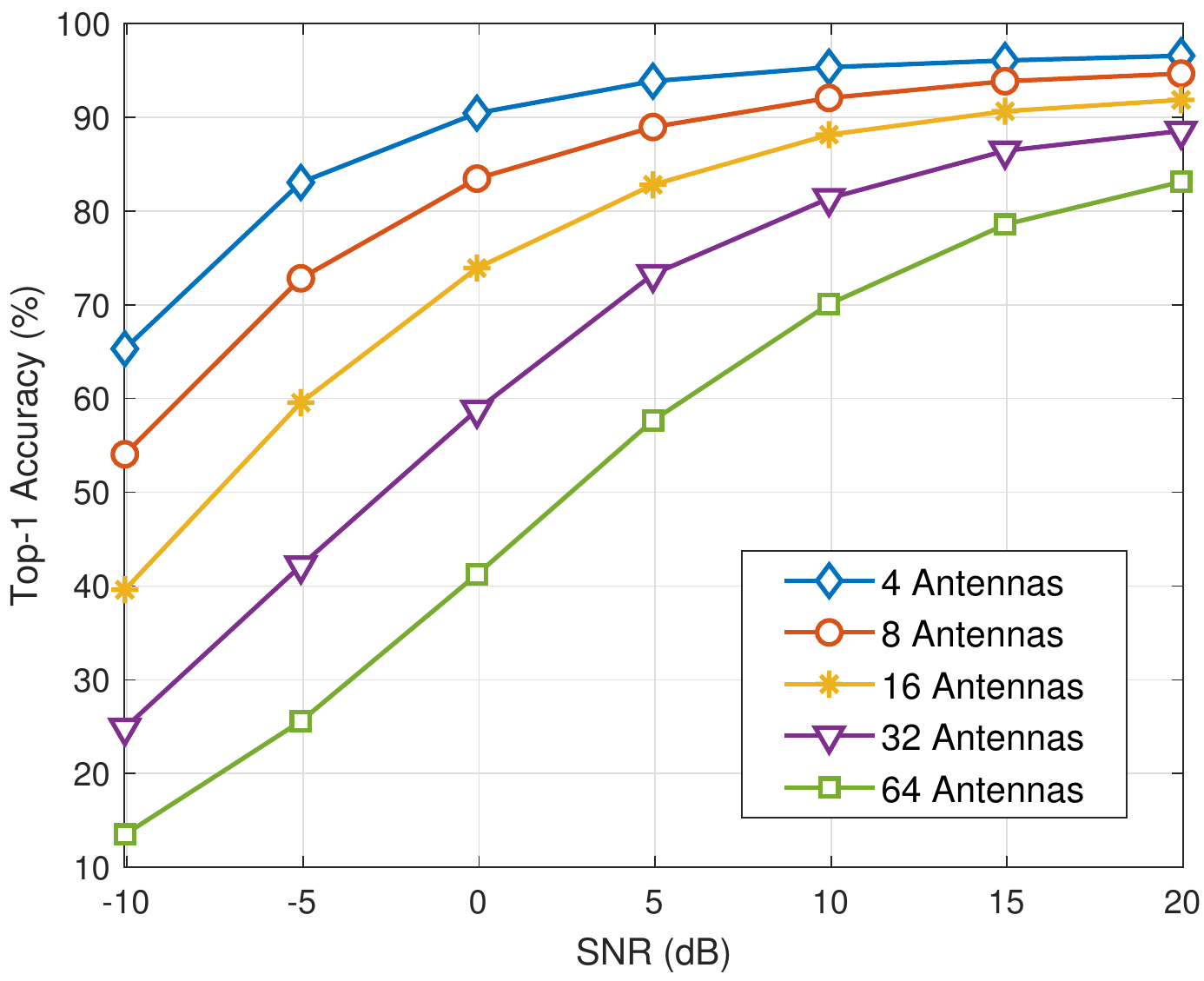} \label{AccVSAnt}}
	\subfigure[]{\includegraphics[width=.47\linewidth]{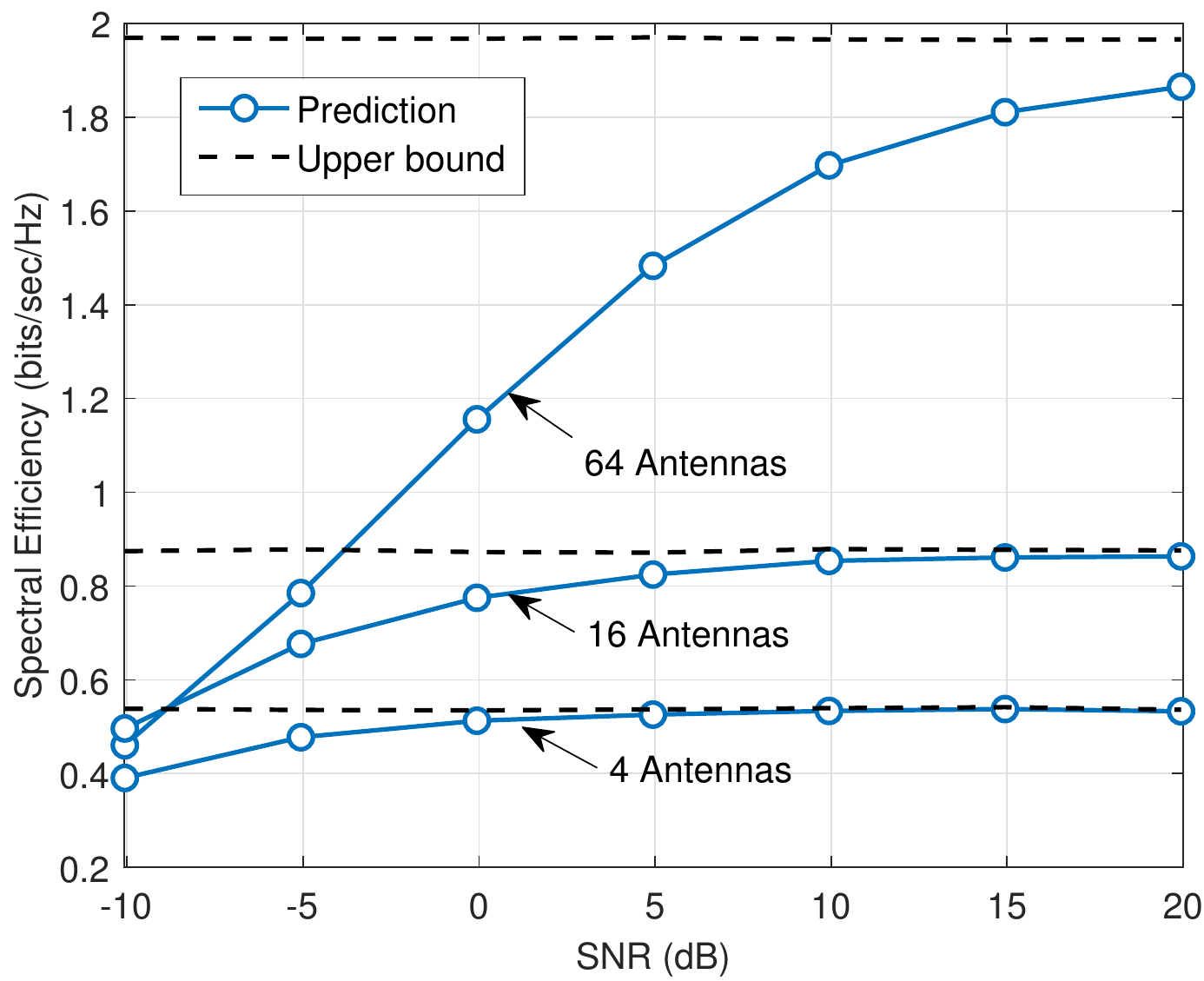} \label{AchVSAnt}}
	\caption{(a) shows the prediction accuracy of the deep neural network as the number of mmWave antenna elements increases. Larger mmWave antenna amounts to larger beam-forming codebook and, therefore, larger number of classes. For some choices of mmWave antennas, the  neural  network model top-1 achievable rate is plotted with its upper bound in (b). All curves are obtained with AWGN only.}
	\label{fig_comb}
\end{figure}

\textbf{Performance with different mmWave array sizes:} With that interesting performance above, one question could come to mind: Is such performance attainable with any number of mmWave antennas? A smaller number of mmWave antennas means there are less classes to learn. On the surface, this looks like an easier prediction task for the deep neural network, which is true. Figure \ref{AccVSAnt} shows the top-1 performance of the neural network with different numbers of mmWave antennas. It is very clear that the proposed deep learning model has better classification performance with a small number of antennas, no matter what the SNR level is. This trend translates to the top-1 achievable rate performance. Figure \ref{AchVSAnt} shows the average achievable rate against SNR for three different mmWave antenna arrays. Although the antenna gain is low with small number of mmWave elements, the deep neural network achieves a much smaller gap with the upper bound for small number of elements compared to that achieved for a large number of elements. This is an immediate reflection of the complexity of the classification task.

\begin{figure}[t]
	\centering
	\includegraphics[width=.5\linewidth]{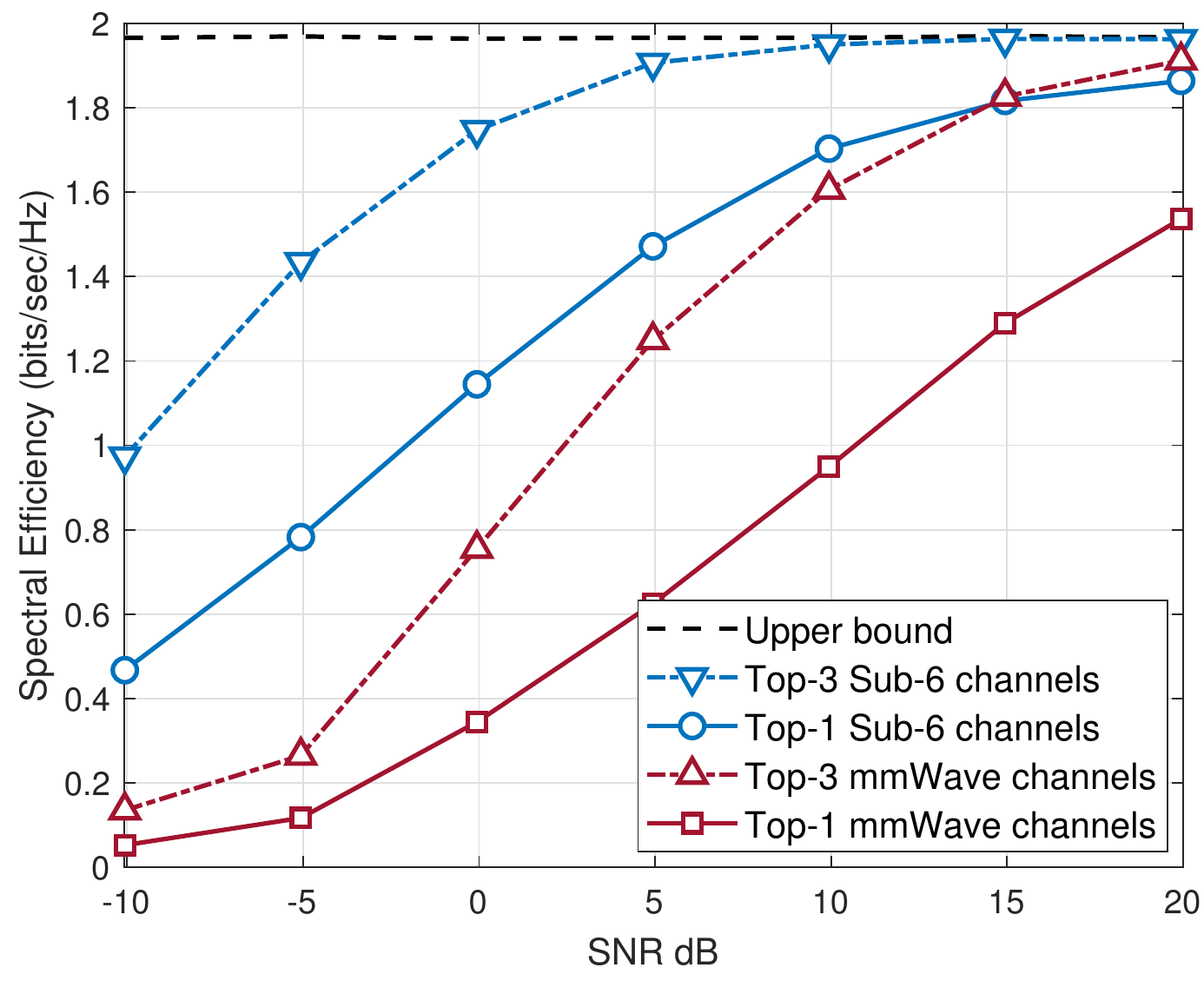}
	\caption{A performance comparison between a neural network trained with inputs from the Sub-6GHz band and another trained with inputs from the mmWave band. The DNN architecture and training hype-parameters used in both cases are the same.}
	\label{mmWaveVSSub6}
\end{figure}

\textbf{Better prediction with mmWave or sub-6GHz channels?} To answer this interesting question, the achievable rates with top-1 and top-3 performance are plotted in \figref{mmWaveVSSub6} for two cases: (i) when the input to the neural network is the sub-6GHz channels or (ii) when the input is the mmWave channel. Even though one would probably expect the mmWave channel knowledge would have a better prediction quality of the mmWave beams, \figref{mmWaveVSSub6} interestingly illustrates that sub-6GHz channels have better beam prediction performance. This is largely because of the large bandwidth of the mmWave channel which results in more corrupted (noisy) channel knowledge compared to the sub-6GHz channel measurements. Such observation emphasizes the promising gain of leveraging sub-6GHz to accurately predict optimal mmWave beamforming vectors.

\subsection{Blockage Prediction} \label{subsec:block_results}
The second set of experiments aims at evaluating the ability of the deep neural networks to \textit{differentiate} blocked and LOS users from the same spatial region. For this end, we adopt the blockage dataset described in \sref{subsec:dataset}, that mixes the LOS and blocked users. Given this dataset, we investigate the blockage prediction performance for the following two labeling approaches
\begin{itemize}
	\item \textbf{Ground-truth labeling:}  This approach assumes the availability of accurate user labels by some means such as, for example, simultaneous localization and mapping techniques. While this may not be a very practical approach, it provides an upper bound for the performance of the other labeling techniques. 
	
	\item \textbf{Power-based labeling:} The LOS paths are normally much stronger (have higher power) compared to the NLOS ones. Therefore, one possible way to differentiate the blocked and unblocked users is by computing the ratio between the power of the strongest beam in the codebook to that of the second strongest beam for each user, referred to as the \textit{power-rule} labelling. This ratio is expected to be large for unblocked users and small (close to one) for blocked users. \figref{fig:hist} corroborates such intuition; it shows two power-ratio histograms, one for the blocked users and the other for LOS users. It is clear that majority of blocked users have power-ratios close to one. With that, a threshold for labeling could be set and used to create the labels during the background training. 
\end{itemize}

\begin{figure}[t]
	\centering
	\subfigure[ ]{\includegraphics[width=.45\linewidth]{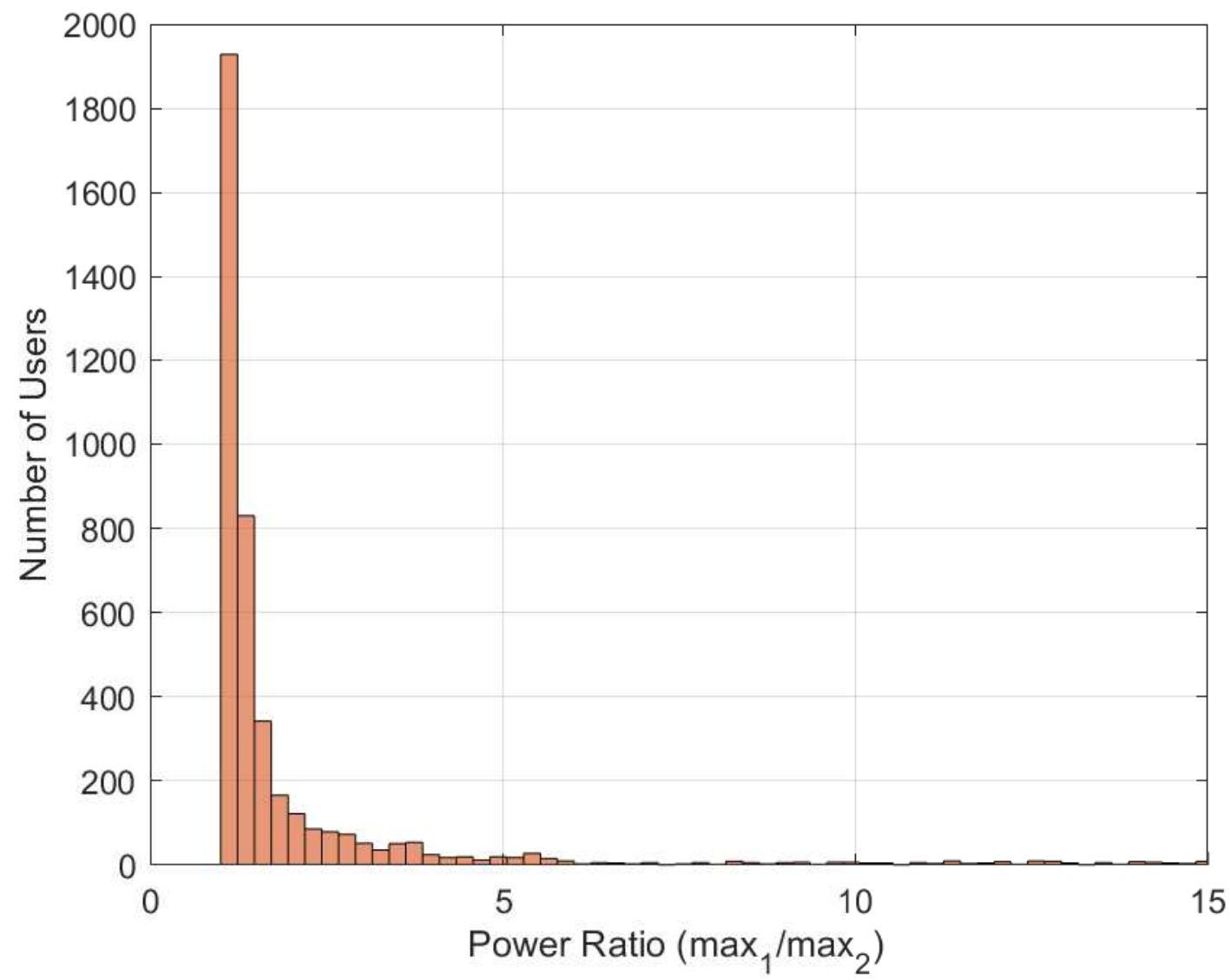}}
	\subfigure[ ]{\includegraphics[width=.44\linewidth]{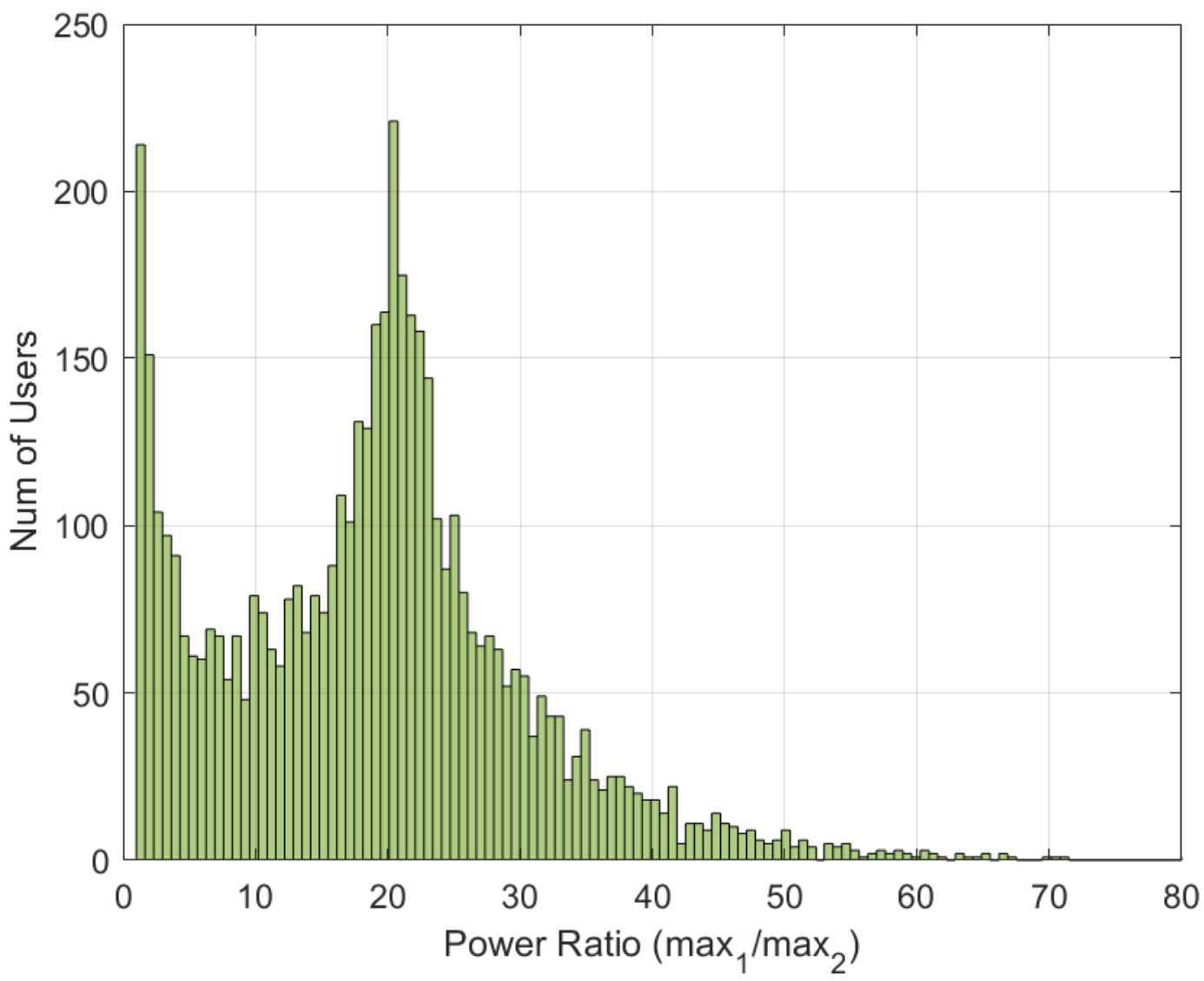}}
	\caption{Power-ratio histograms from the two scenarios considered in this work. (a) is for blocked users while (b) is for LOS users. The ratio measures the difference in power between the strongest beam in the codebook and the second strongest.}
	\label{fig:hist}
\end{figure}

To evaluate the performance of the proposed sub-6GHz based blockage prediction, we consider the blockage scenario and dataset, described in Sections \ref{subsec:scenarios} and \ref{subsec:dataset}, with both the ground-truth and power-rule labeling approaches. For this setup, and as discussed earlier in \sref{subsec:train}, transfer learning is used to train the deep neural network. In \figref{fig:blockage}, we plot the success probability (accuracy percentage) of blockage prediction. First, \figref{fig:blockage} illustrates that the deep learning model has excellent classification ability for the ground-truth labeling approach under a wide range of SNRs. This performance is then compared to the case when the power-rule labeling technique is used. Despite the label contamination, i.e., some miss-labeled users are present during training, the deep neural network model still performs relatively well; its accuracy exceeds 90\% at high SNRs. This highlights the potential of using sub-6GHz channels to effectively predict mmWave link blockages. 
\begin{figure}[t]
	\centering
	\includegraphics[width=.5\linewidth]{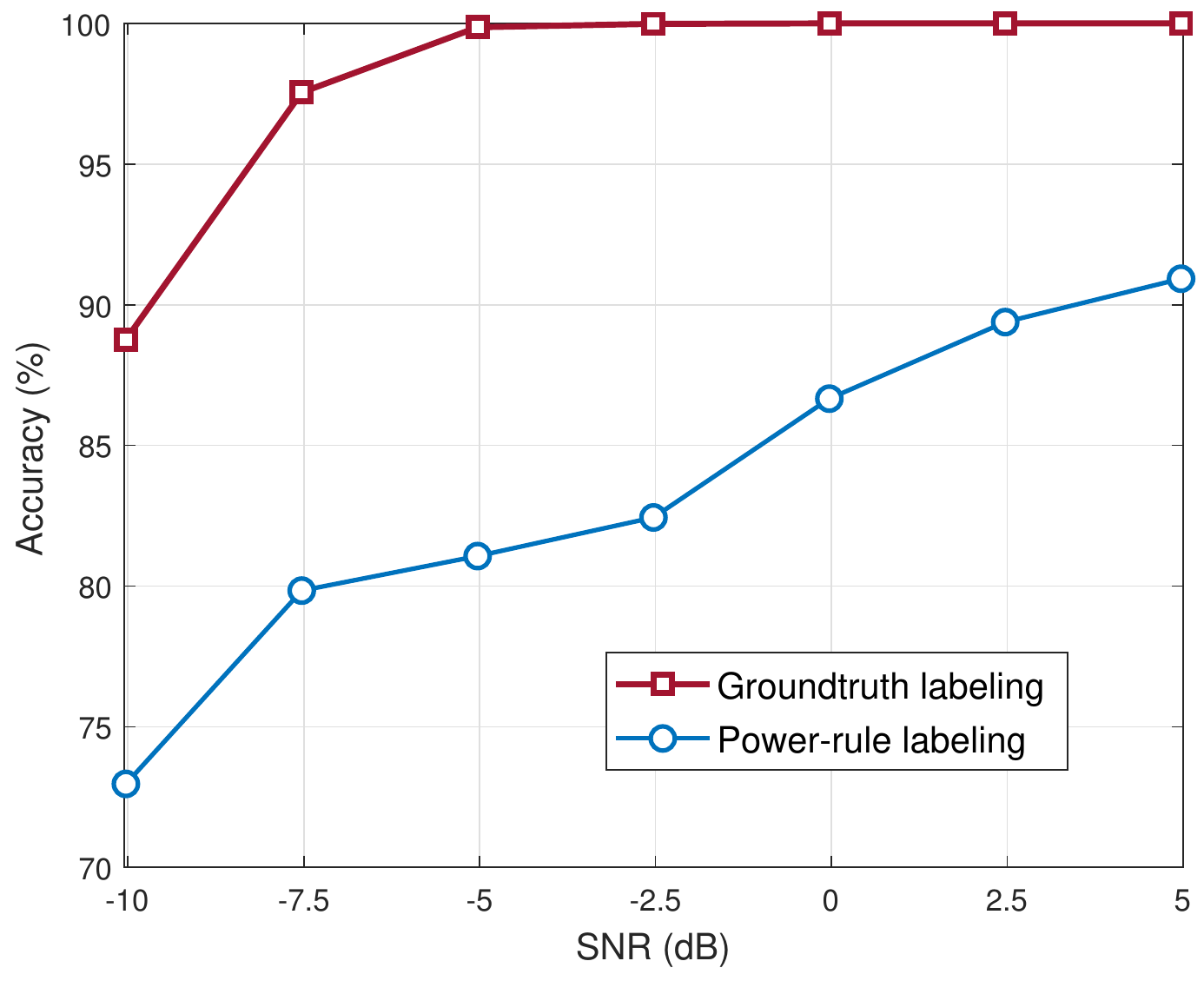}
	\caption{Blockage prediction accuracy of the deep learning model. The best performing model for beam prediction is used for transfer learning. The model was trained with two different label sets, ground-truth and power-rule labeling.}
	\label{fig:blockage}
\end{figure}
\section{Conclusion}
In this paper, we established the conditions under which the mapping functions from a sub-6GHz channel to the optimal mmWave beam and blockage status exist. Leveraging the universal approximation theory, we proved that a large enough neural network can learn these mapping functions such that the success probabilities of predicting the optimal mmWave beam and blockage status be arbitrarily close to one. Therefore, we design a neural network that can perform both prediction tasks using sub-6GHz channels. We use accurate 3D ray-tracing to develop evaluation datasets and test our network. The results show promising and impressive performance; the network, when trained with enough data, does both tasks with relatively high fidelity, even in the presence of noisy sub-6GHz channels. Our beam-prediction experiments reveal an interesting tendency of the network to learn correct beam direction. Although it sometime mis-predicts the mmWave beam, it often selects a beam in the vicinity of the optimal one. This is attainable with small or large mmWave antenna arrays and at reasonable SNRs. Such performance extends to the blockage prediction task; the network, under high SNRs, is capable of predicting the LOS link status with more than 90\% success probability. This could yield interesting gains for the reliability of mmWave systems. For future work, it would interesting to develop learning models that can handle the dynamics of the environment, and design more efficient and practical labeling approaches to label blockage data.

\begin{appendices}
\section{} \label{app1}
\textit{Proof of Corollary \ref{cor1}:} 
The success probability in predicting the optimal mmWave beam $\bff^\star$ using the sub-6GHz channels can be written as
\begin{align}
\kappa_1 & =\mathbb{P}\left(\hat{\bff}=\bff^\star\left|\bh_\text{sub-6}\right.\right) \\
&=1- \mathbb{P}\left(\hat{\bff} \neq \bff^\star\left|\bh_\text{sub-6} \right. \right).
\end{align}
Since the predicted beam $\hat{\bff}$ is obtained from the outputs of the $|\mathcal{F}|$ neural networks by applying  $\hat{n}=\argmax_{n=1,2,...,|\mathcal{F}|} \boldsymbol{\Pi}^n_{N}(.)$ and setting $\hat{\bff}$ as the $\hat{n}$th beam in the codebook $\mathcal{F}$, then $\kappa_1$ can be expressed in terms of $\boldsymbol{\Pi}^n_{N}(.)$ as 
\begin{equation}
\kappa_1= 1- \mathbb{P}\left(      \boldsymbol{\Pi}^{\hat{n}}_N (\bh_\text{sub-6})    >      \boldsymbol{\Pi}^{n^\star}_N (\bh_\text{sub-6})       \right) 
\end{equation}
Now, given Proposition \ref{prop2}, we reach	
\begin{align}
\kappa_1 & \geq 1- \mathbb{P} \left( \boldsymbol{\Phi}^{\hat{n}} + \epsilon_{\hat{n}} > \boldsymbol{\Phi}^{n^\star} -  \epsilon_{n^\star}    \right) \\
& = 1- \mathbb{P} \left( \boldsymbol{\Phi}^{n^\star} -  \boldsymbol{\Phi}^{\hat{n}} <  \epsilon_{n^\star}  +  \epsilon_{\hat{n}}   \right) \\
& \stackrel{(a)}{\geq} 1- \mathbb{P} \left( \boldsymbol{\Phi}^{n^\star} -  \boldsymbol{\Phi}^{\hat{n}} < 2 \bar{\epsilon}   \right)
\end{align}
where (a) follows by defining $\bar{\epsilon}=\max_{n={1,2,...,|\mathcal{F}|}} \epsilon_n$.  Now, given Proposition \ref{prop2} and Assumption \ref{assume_CB}, for any $\epsilon >0$, there exists $\bar{\epsilon}$, such that $\mathbb{P}\left( \boldsymbol{\Phi}^{n^\star} -  \boldsymbol{\Phi}^{\hat{n}} < 2 \bar{\epsilon}_{n^\star}  \right) < \epsilon$, which concludes the proof.

\end{appendices}


\end{document}